\theoremstyle{plain}
\theoremstyle{definition}
\DeclareMathOperator{\mypolylog}{polylog}
\newcommand{\N}{\mathbb{N}}
\newclass{\lcl}{LCL}
\newclass{\local}{LOCAL}
\newcommand{\A}{\ensuremath{\mathcal A}}
\newcommand{\G}{\ensuremath{\mathcal G}}
\newcommand{\F}{\ensuremath{\mathcal F}}
\newcommand{\dirnext}{\ensuremath{\mathsf {Next}}}
\newcommand{\dirprev}{\ensuremath{\mathsf {Prev}}}
\newcommand{\error}{\ensuremath{\mathsf {Error}}}
\newcommand{\lhead}{\ensuremath{\mathsf {Head}}}
\newcommand{\lstate}{\ensuremath{\mathsf {State}}}
\newcommand{\ltape}{\ensuremath{\mathsf {Tape}}}
\newcommand{\lexemptu}{\ensuremath{\mathsf {Exempt_U}}}
\newcommand{\ldimension}{\ensuremath{\mathsf {Dimension}}}
\newcommand{\lunbalanced}{\ensuremath{\mathsf {Unbalanced}}}
\newcommand{\lfinished}{\ensuremath{\mathsf {Exempt_M}}}
\newcommand{\lin}{\ensuremath{\mathsf {in}}}
\newcommand{\lout}{\ensuremath{\mathsf {out}}}
\newcommand{\modelname}{Linear Bounded Automaton}
\newcommand{\shortname}{\mbox{\sf{LBA}}}
\newcommand{\checkradius}{r}
\newcommand{\tnew}{\tau}
\newcommand{\told}{\tau_{\operatorname{\mathsf {orig}}}}
\newcommand{\tc}{c}
\newcommand{\ball}[1]{\operatorname{\mathsf{Ball}}_{#1}}
\newcommand{\dg}[1]{d_{#1}}
\newcommand{\dgg}[2]{d_{#1}^{#2}}
\DeclareMathOperator{\del}{\mathsf{Del}}
\newcommand{\deltree}{\mathcal{T}}
\newcommand{\paths}{\mathcal{P}}
\newcommand{\spaths}{\mathcal{Q}}
\newcommand{\lpump}{\ell_{\operatorname{\mathsf {pump}}}}
\DeclareMathOperator{\pump}{\mathsf {Pump}}
\newcommand{\aorig}{\mathcal{A}}
\newcommand{\pumpmult}{B}
\DeclareMathOperator{\replace}{\mathsf {Replace}}
\newcommand{\eqrg}{\stackrel{*}{\sim}}
\newcommand{\labeling}{\mathcal{L}}
\newcommand{\dist}[2]{\operatorname{dist}(#1,#2)}
\DeclareMathOperator{\virt}{\mathsf {Virt}}
\DeclareMathOperator{\typepath}{Type}
\DeclareMathOperator{\typetree}{Class}
\newcommand{\lsetgrid}{\ensuremath{\mathcal{L}^\mathsf {grid}}}
\newcommand{\lbalabels}{\ensuremath{\mathcal{L}^\mathsf {encoding}}}
\newcommand{\lsetunbalanced}{\ensuremath{\mathcal{L}^\mathsf {unbalanced}}}
\newcommand{\lseterror}{\ensuremath{\mathcal{L}^\mathsf {error}}}
\newcommand{\figscale}{0.75}
\title{Almost Global Problems in the \boldmath \texorpdfstring{\local{}}{LOCAL} Model}
\author{Alkida Balliu}{University of Freiburg, Germany\footnote{current affiliation}\\ Aalto University, Finland\footnote{where most of this work was done}} {alkida.balliu@cs.uni-freiburg.de} {} {}
\author{Sebastian Brandt}{ETH Zurich, Switzerland$^{1,2}$} {brandts@ethz.ch}{}{}
\author{Dennis Olivetti}{University of Freiburg, Germany$^1$\\ Aalto University, Finland$^2$} {dennis.olivetti@cs.uni-freiburg.de} {} {}
\author{Jukka Suomela}{Aalto University, Finland$^{1,2}$} {jukka.suomela@aalto.fi} {} {}
\authorrunning{A. Balliu, S. Brandt, D. Olivetti, J. Suomela}
\subjclass{
	Theory of computation $\rightarrow$ Distributed computing models,
	Theory of computation $\rightarrow$ Complexity classes
}
\keywords{Distributed complexity theory, locally checkable labellings, LOCAL model}
\begin{document}
	
\maketitle

\begin{abstract}
	The landscape of the distributed time complexity is nowadays well-understood for subpolynomial complexities. When we look at deterministic algorithms in the \local{} model and locally checkable problems (\lcl{}s) in bounded-degree graphs, the following picture emerges:
	\begin{itemize}
		\item There are lots of problems with time complexities of $\Theta(\log^* n)$ or $\Theta(\log n)$.
		\item It is not possible to have a problem with complexity between $\omega(\log^* n)$ and $o(\log n)$.
		\item In \emph{general graphs}, we can construct \lcl{} problems with infinitely many complexities between $\omega(\log n)$ and $n^{o(1)}$.
		\item In \emph{trees}, problems with such complexities do not exist.
	\end{itemize}
	However, the high end of the complexity spectrum was left open by prior work. In general graphs there are \lcl{} problems with complexities of the form $\Theta(n^\alpha)$ for any rational $0 < \alpha \le 1/2$, while for trees only complexities of the form $\Theta(n^{1/k})$ are known. No \lcl{} problem with complexity between $\omega(\sqrt{n})$ and $o(n)$ is known, and neither are there results that would show that such problems do not exist. We show that:
	\begin{itemize}
		\item In \emph{general graphs}, we can construct \lcl{} problems with infinitely many complexities between $\omega(\sqrt{n})$ and $o(n)$.
		\item In \emph{trees}, problems with such complexities do not exist.
	\end{itemize}
	Put otherwise, we show that any \lcl{} with a complexity $o(n)$ can be solved in time $O(\sqrt{n})$ in trees, while the same is not true in general graphs.
\end{abstract}

\section{Introduction}

Recently, in the study of \emph{distributed graph algorithms}, there has been a lot of interest on \emph{structural complexity theory}: instead of studying the distributed time complexity of specific graph problems, researchers have started to put more focus on the study of \emph{complexity classes} in this context.

\subparagraph{LCL problems.}

A particularly fruitful research direction has been the study of distributed time complexity classes of so-called $\lcl$ problems (locally checkable labellings). We will define $\lcl$s formally in Section~\ref{subsec:def-lcl}, but the informal idea is that $\lcl$s are graph problems in which \emph{feasible solutions can be verified by checking all constant-radius neighbourhoods}. Examples of such problems include vertex colouring with $k$ colours, edge colouring with $k$ colours, maximal independent sets, maximal matchings, and sinkless orientations.

$\lcl$s play a role similar to the class $\NP$ in the centralised complexity theory: these are problems that would be easy to solve with a \emph{nondeterministic} distributed algorithm---guess a solution and verify it in $O(1)$ rounds---but it is not at all obvious what is the distributed time complexity of solving a given $\lcl$ problem with \emph{deterministic} distributed algorithms.

\subparagraph{Distributed structural complexity.}

In the classical (centralised, sequential) complexity theory one of the cornerstones is the \emph{time hierarchy theorem}~\cite{HS95}. In essence, it is known that giving more time always makes it possible to solve more problems.  Distributed structural complexity is fundamentally different: there are various \emph{gap results} that establish that there are no $\lcl$ problems with complexities in a certain range. For example, it is known that there is no $\lcl$ problem whose deterministic time complexity on bounded-degree graphs is between $\omega(\log^* n)$ and $o(\log n)$ \cite{chang16exponential}.

Such gap results have also direct applications: we can \emph{speed up} algorithms for which the current upper bound falls in one of the gaps. For example, it is known that $\Delta$-colouring in bounded-degree graphs can be solved in $\mypolylog n$ time \cite{panconesi95delta}. Hence $4$-colouring in 2-dimensional grids can be also solved in $\mypolylog n$ time. But we also know that in 2-dimensional grids there is a gap in distributed time complexities between $\omega(\log^* n)$ and $o(\sqrt{n})$ \cite{Brandt2017}, and therefore we know we can solve $4$-colouring in $O(\log^* n)$ time.

The ultimate goal here is to identify all such gaps in the landscape of distributed time complexity, for each graph class of interest.

\subparagraph{State of the art.}

Some of the most interesting open problems at the moment are related to \emph{polynomial complexities in trees}. The key results from prior work are:
\begin{itemize}
	\item In bounded-degree trees, for each positive integer $k$ there is an $\lcl$ problem with time complexity $\Theta(n^{1/k})$ \cite{Chang2019}.
	\item In bounded-degree graphs, for each rational number $0 < \alpha \le 1/2$ there is an $\lcl$ problem with time complexity $\Theta(n^\alpha)$ \cite{Balliu2018stoc}.
\end{itemize}
However, there is no separation between trees and general graphs in the polynomial region. Furthermore, we do not have any $\lcl$ problems with time complexities $\Theta(n^\alpha)$ for any $1/2 < \alpha < 1$.

\subparagraph{Our contributions.}

This work resolves both of the above questions. We show that:
\begin{itemize}
	\item In bounded-degree graphs, for each rational number $1/2 < \alpha < 1$ there is an $\lcl$ problem with time complexity $\Theta(n^\alpha)$.
	\item In bounded-degree trees, there is no $\lcl$ problem with time complexity between $\omega(\sqrt{n})$ and $o(n)$.
\end{itemize}
Hence whenever we have a slightly sublinear algorithm, we can always speed it up to $O(\sqrt{n})$ in trees, but this is not always possible in general graphs.

\subparagraph{Key techniques.}

We use ideas from the classical centralised complexity theory---e.g.\ Turing machines and regular languages---to prove results in distributed complexity theory.

The key idea for showing that there are \lcl{}s with complexities $\Theta(n^\alpha)$ in bounded-degree graphs is that we can take any \emph{linear bounded automaton} $M$ (a Turing machine with a bounded tape), and construct an $\lcl$ problem $\Pi_M$ such that the \emph{distributed} time complexity of $\Pi$ is a function of the \emph{sequential} running time of $M$. Prior work \cite{Balliu2018stoc} used a class of \emph{counter machines} for a somewhat similar purpose, but the construction in the present work is much simpler, and Turing machines are more convenient to program than the counter machines used in the prior work.

To prove the gap result, we heavily rely on Chang and Pettie's \cite{Chang2019} ideas: they show that one can relate $\lcl$ problems in trees to regular languages and this way generate equivalent subtrees by ``pumping''. However, there is one fundamental difference:
\begin{itemize}
	\item Chang and Pettie first construct certain \emph{universal} collections of tree fragments (that do not depend on the input graph), use the existence of a fast algorithm to show that these fragments can be labelled in a convenient way, and finally use such a labelling to solve any given input efficiently.
	\item We work directly with the \emph{specific} input graph, expand it by ``pumping'', and apply a fast algorithm there directly.
\end{itemize}
Many speedup results make use of the following idea: given a graph with $n$ nodes, we pick a much \emph{smaller} value $n' \ll n$ and lie to the algorithm that we have a tiny graph with only $n'$ nodes \cite{chang16exponential,Brandt2017}. Our approach essentially reverses this: given a graph with $n$ nodes and an algorithm $\A$, we pick a much \emph{larger} value $n' \gg n$ and lie to the algorithm that we have a huge graph with $n'$ nodes.

\subparagraph{Open problems.}

Our work establishes a gap between $\Theta(n^{1/2})$ and $\Theta(n)$ in trees. The next natural step would be to generalise the result and establish a gap between $\Theta(n^{1/(k+1)})$ and $\Theta(n^{1/k})$ for all positive integers $k$.

\section{Model and related work}
As we study LCL problems, a family of problems defined on \emph{bounded-degree graphs}, we assume that our input graphs are of degree at most $\Delta$, where $\Delta = O(1)$ is a known constant. Each input graph $G=(V,E)$ is simple, connected, and undirected; here $V$ is the set of nodes and $E$ is the set of edges, and we denote by $n=|V|$ the total number of nodes in the input graph.

\subsection{Model of computation}\label{subsec:def.model}
The model considered in this paper is the well studied \local{} model~\cite{Peleg2000,Linial1992}. In the \local{} model, each node $v \in V$ of the input graph $G$ runs the same deterministic algorithm.
The nodes are labelled with unique $O(\log n)$-bit identifiers, and initially each node knows only its own identifier, its own degree, and the total number of nodes $n$.

Computation proceeds in synchronous rounds. At each round, each node
\begin{itemize}
	\item sends a message to its neighbours (it may be a different message for different neighbours),
	\item receives messages from its neighbours,
	\item performs some computation based on the received messages. 
\end{itemize}
In the \local{} model, there is no restriction on the size of the messages or on the computational power of a node. The time complexity of an algorithm is measured as the number of communication rounds that are required such that every node is able to stop and produce its local output. Hence, after $t$ rounds in the \local{} model, each node can gather the information in the network up to distance $t$ from it. In other words, in $t$ rounds a node can gather all information within its $t$-radius neighbourhood, where the $t$-radius neighbourhood of a node $v$ is the subgraph containing all nodes at distance at most $t$ from $v$ and all edges incident to nodes at distance at most $t-1$ from $v$ (including the inputs given to these nodes). Also, in $t$ rounds, the information outside the $t$-radius neighbourhood of a node $v$ cannot reach~$v$. This means that a $t$-round algorithm running in the \local{} model can be seen as a function that maps all possible $t$-radius neighbourhoods to the outputs. Notice that, in the \local{} model, every problem can be solved in \emph{diameter} number of rounds, where the diameter of a graph $G$ is defined as the largest hop-distance among any pair of nodes in $G$. In fact, in diameter time each node can gather all information there is in the whole graph and solve the problem locally.

\subsection{Locally checkable labellings}\label{subsec:def-lcl}
Locally checkable labelling problems (\lcl{}s) were introduced in the seminal work of Naor and Stockmeyer \cite{Naor1995}. Informally, \lcl{}s are graph problems defined on bounded-degree graphs (i.e., graphs where the maximum degree is constant with respect to the number of nodes), where nodes have as input a label from a constant-size set of input labels, and they must produce an output from a constant-size set of output labels. The validity of these output labels is determined by a set of local constraints.

\subparagraph{Formal definition.}

Let $\F$ be the family of bounded-degree graphs. An \lcl{} is defined as a tuple $\Pi = (\Sigma_{\lin}, \Sigma_{\lout},\allowbreak C, r)$ as follows.
\begin{itemize}
	\item $\Sigma_{\lin}$ and $\Sigma_{\lout}$ are constant-size sets of labels;
	\item $r$ is an arbitrary constant (called the checkability radius of the problem);
	\item $C$ is a set of graphs where 
	\begin{itemize}
		\item each graph $H\in C$ is centred at some node $v$,
		\item the distance of $v$ from all other nodes in $H$, i.e., the radius of $v$, is at most $r$,
		\item each node $u$ is labelled with a pair $(i(u), o(u))\in \Sigma_{\lin}\times \Sigma_{\lout}$.
	\end{itemize} 
\end{itemize}

\subparagraph{An example.}

An example of an \lcl{} problem is vertex $3$-colouring, where $\Sigma_{\lin}=\{\bot\}$, $\Sigma_{\lout}=\{1,2,3\}$, $r=1$, and $C$ is defined as all graphs of radius $1$ in $\F$ such that each node has a colour in $\Sigma_{\lout}$ that is different from the ones of its neighbours.

\subparagraph{Solving a problem.}

In general, solving an \lcl{} means the following. We are given a graph $G=(V, E)\in\F$ and an input assignment $i\colon V\to\Sigma_{\lin}$. The goal is to produce an output assignment $o\colon V\to\Sigma_{\lout}$. Let $B(v)$ be the subgraph of $G$ induced by nodes of distance at most $r$ from $v$, augmented with the inputs and outputs assigned by $i$ and $o$. The output assignment is valid if and only if, for each node $v\in V$, we have $B(v)\in C$. In that case, we call $(G,i,o)$ a valid configuration. 

This can be adapted to a distributed setting in a straightforward manner: if we are solving an \lcl{} in the \local{} model with a distributed algorithm $\A$, the input graph $G=(V, E)\in\F$ is the communication network, each node $v$ initially knows only its own part of the input $i(v)\in\Sigma_{\lin}$, and when algorithm $\A$ stops, each node $v$ has to know its own part of output $o(v)\in\Sigma_{\lout}$. The local outputs have to form a valid configuration $(G,i,o)$.

\subparagraph{Distributed time complexity.}

The distributed time complexity of an \lcl{} problem $\Pi$ in a graph family $\F$ is the pointwise smallest $t\colon \N \to \N$ such that there is a distributed algorithm $\A$ that solves $\Pi$ in $t(n)$ communication rounds in any graph $G \in \F$ with $n$ nodes, for any $n \in \N$, and for any input labelling of $G$.

\subparagraph{Distributed verifiers.}

Above, we have defined an \lcl{} as a set $C$ of correctly labelled subgraphs. Equivalently, we could define an \lcl{} in terms of a \emph{verifier} $\A'$. A verifier is a distributed algorithm that receives both $i$ and $o$ as inputs, runs for $r$ communication rounds, and then each node $v$ outputs either `accept' or `reject'. We require that the output of $\A'$ does not depend on the ID assignment or on the size of the input graph, but only on the structure of $G$ and input and output labels in the $r$-radius neighbourhood of $v$. Now we say that $(G,i,o)$ is a valid configuration if all nodes output `accept'.

This is equivalent to the above definition, as in $r$ communication rounds each node $v$ can gather all information within distance $r$, and nothing else. Hence $\A'$ can output `accept' if $B(v) \in C$; equivalently, the output of any such algorithm $\A'$ defines a set $C$ of correctly labelled neighbourhoods.

If $\A$ solves an \lcl{} problem $\Pi$ in time $t(n)$, and $\A'$ is the verifier for $\Pi$, then by definition the composition of $\A$ and $\A'$ is a distributed algorithm that runs in $t(n) + r$ rounds and always outputs `accept' everywhere. It is important to note that, while the output of algorithm $\A$ may depend on the ID assignment that nodes have, the output of verifier $\A'$ must not depend on the ID assignment or on the size of the graph.

\subsection{Related work}

\subparagraph{Cycles and paths.} \lcl{} problems are fully understood in the case of cycles and paths. In these graphs it is known that there are \lcl{} problems having complexities $O(1)$, e.g.\ trivial problems, $\Theta(\log^* n)$, e.g.\ $3$-vertex colouring, and $\Theta(n)$, e.g.\ $2$-vertex colouring~\cite{Linial1992,cole86deterministic}. Chang, Kopelowitz, and Pettie~\cite{chang16exponential} showed two automatic speedup results: any $o(\log^* n)$-time algorithm can be converted into an $O(1)$-time algorithm; any $o(n)$-time algorithm can be converted into an $O(\log^* n)$-time algorithm. They also showed that randomness does not help in comparison with deterministic algorithms in cycles and paths.

\subparagraph{Oriented grids.} Brandt et al.~\cite{Brandt2017} studied \lcl{} problems on oriented grids, showing that, as in the case of cycles and paths, the only possible complexities of \lcl{}s are $O(1)$, $\Theta(\log^* n)$, and $\Theta(n)$, on $n\times n$ grids, and it is also known that randomness does not help \cite{chang16exponential, GHK18}. However, while it is decidable whether a given \lcl{} on cycles can be solved in $t$ rounds in the \local{} model~\cite{Naor1995,Brandt2017}, it is not the case for oriented grids~\cite{Brandt2017}.

\subparagraph{Trees.} Although well studied, \lcl{}s on trees are not fully understood yet. Chang and Pettie \cite{Chang2019} show that any $n^{o(1)}$-time algorithm can be converted into an $O(\log n)$-time algorithm. In the same paper they show how to obtain \lcl{} problems on trees having deterministic and randomized complexity of $\Theta(n^{1/k})$, for any integer $k$. However, it is not known if there are problems of complexities between $o(n^{1/k})$ and $\omega(n^{1/(k+1)})$. Regarding decidability on trees, given an \lcl{} it is possible to decide whether it has complexity $O(\log n)$ or $n^{O(1)}$ \cite{Chang2019}. In other words, it is possible to decide on which side of the gap between $\omega(\log n)$ and $n^{o(1)}$ an \lcl{} lies, but it is still an open question if we can decide whether a given \lcl{} has complexity $O(\log^* n)$ or $\Omega(\log n)$.

\subparagraph{General graphs.} Another key direction of research is understanding \lcl{}s on general (bounded-degree) graphs. Using the techniques presented by Naor and Stockmeyer \cite{Naor1995}, it is possible to show that any $o(\log \log^*n)$-time algorithm can be sped up to $O(1)$ rounds. It is known that there are \lcl{} problems with complexities $\Theta(\log^* n)$~\cite{barenboim16sublinear,barenboim14distributed,panconesi01simple,fraigniaud16local} and $\Theta(\log n)$~\cite{Brandt2016,chang16exponential,ghaffari17distributed}. On the other hand, Chang et al.~\cite{chang16exponential} showed that there are no $\lcl{}$ problems with deterministic complexities between $\omega(\log^* n)$ and $o(\log n)$. It is known that there are problems (for example, $\Delta$-colouring) that require $\Omega(\log n)$ rounds~\cite{Brandt2016,chang18complexity}, for which there are algorithms solving them in $O(\mypolylog n)$ rounds~\cite{panconesi95delta}. Until very recently, it was thought that there would be many other gaps in the landscape of complexities of \lcl{} problems in general graphs. Unfortunately, it has been shown in \cite{Balliu2018stoc} that this is not the case: it is possible to obtain \lcl{}s with numerous different deterministic time complexities, including $\Theta( \log^{\alpha} n )$ and $\Theta( \log^{\alpha} \log^* n )$  for any $\alpha \ge 1$, $2^{\Theta( \log^{\alpha} n )}$, $\smash{2^{\Theta( \log^{\alpha} \log^* n )}}$, and $\Theta((\log^* n)^{\alpha})$ for any $\alpha \le 1$, and $\Theta(n^{\alpha})$ for any $\alpha < 1/2$ (where $\alpha$ is a positive rational number).

\section{Near-linear complexities in general graphs}

In this section we show that there are \lcl{}s with time complexities in the spectrum between $\omega(\sqrt{n})$ and $o(n)$. To show this result, we prove that we can take any linear bounded automaton (\shortname) $M$, that is, a Turing machine with a tape of a bounded size, and an integer $i \geq 2$, and construct an \lcl{} problem $\Pi_M^i$, such that the distributed time complexity of $\Pi_M^i$ is related to the choice of $i$ and to the sequential running time of $M$ when starting from an empty tape.

In particular, given an \shortname{} $M$, we will define a family of graphs, that we call \emph{valid instances}, where nodes are required to output an encoding of the execution of $M$. An \lcl{} must be defined on any (bounded-degree) graph, without any promise on the graph structure, thus, we will define the \lcl{} $\Pi_M^i$ by requiring nodes to either prove that the given instance is not a valid instance, or to output a correct encoding of the execution of $M$ if the instance is a valid one. 
The manner in which the execution has to be encoded ensures that the complexity of the \lcl{} $\Pi_M^i$ will depend on the running time of the \shortname{} $M$, and by constructing \shortname{}s with suitable running times, we can show our result. The key idea here is that we will use valid instances to prove a lower bound on the time complexity of our \lcl{}s, and we will prove that adding all the other instances does not make the problem harder.

\subparagraph{A simplified example.}

For example, consider an \shortname{} $M$ that encodes a unary counter, starting with the all-$0$ bit string and terminating when the all-$1$ bit string is reached. Clearly, the running time of $M$ is linear in the length of the tape. We can represent the full execution of $M$ using $2$ dimensions, one for the tape and the other for time, and we can encode this execution on a $2$-dimensional grid. See Figure~\ref{fig:example-unaryCounter} for an illustration. Notice that the length of the \emph{time} dimension of this grid depends only on the length of the \emph{tape} dimension and on $M$, and for a unary counter the length of the time dimension will always be the same as the length of the tape dimension. The \lcl{} $\Pi_M^2$ will be defined such that valid instances are $2$-dimensional grids with balanced dimensions $\sqrt{n} \times \sqrt{n}$ ($n$ nodes in total), and the idea is that, given such a grid, nodes are required to output an encoding of the full execution of $M$, and this would require $\Theta(n^{1/2})$ rounds (since, in order to determine their output bit, certain nodes will have to determine the bit string they are part of, which in turn requires seeing the far end of the grid where the all-$0$ bit string is encoded). 

In order to obtain \lcl{}s with time complexity $\Theta(n^\alpha)$, where $1/2<\alpha< 1$, we define valid instances in a slightly different manner. We consider grids with $i > 2$ dimensions, and we let nodes choose where to encode the execution of $M$. We will allow nodes to choose an arbitrary dimension to use as the tape dimension, but, for technical reasons, we restrict nodes to use dimension $1$ as the time dimension.
The idea here is that, if the size of dimensions $2,\dotsc,i$ is not the same, nodes can minimize their running time by coordinating and picking the dimension $j$ (different from $1$) of smallest length as the tape dimension, and encode the execution of $M$ using dimension $1$ for time and dimension $j$ for the tape. Thereby we ensure that in a worst-case instance all dimensions except dimension $1$ have the same length. Also, if a grid has strictly fewer or more than $i$ dimensions, it will not be part of the family of valid instances. In other words, our \lcl{}s can be solved faster if the input graph does not look like a multidimensional grid with the correct number $i$ of dimensions. Then, by using \shortname{}s with different running times, and by choosing different values for $i$, we will prove our claim.

\begin{figure}
	\centering
	\includegraphics[width=0.30\textwidth]{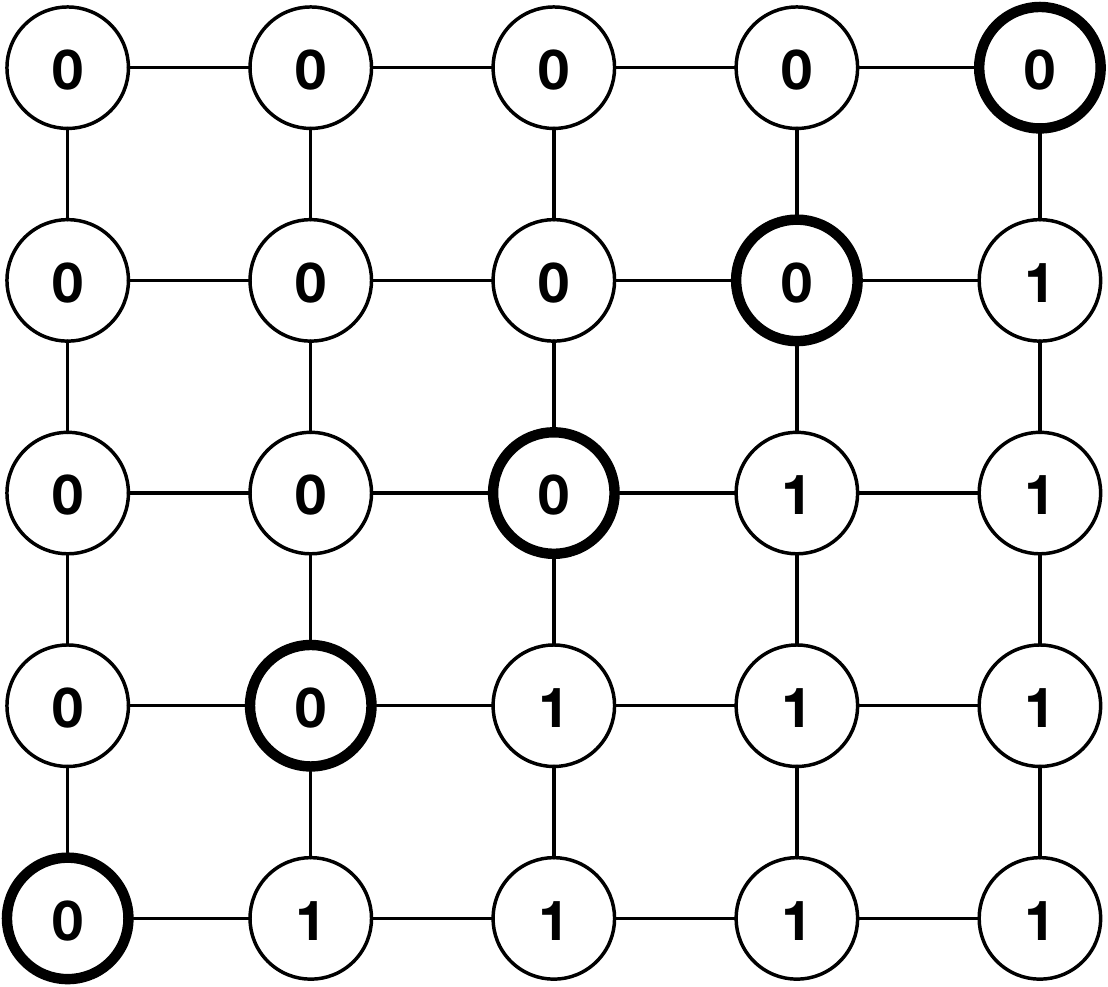}
	\caption{An example of a correct encoding of an \shortname{} execution in a grid; the horizontal dimension represents time, while the vertical dimension represents the position on the tape. The \shortname{} executes a unary counter, and the marked nodes represent the position of the head.}
	\label{fig:example-unaryCounter}
\end{figure}

\subparagraph{Technicalities.}
The process of defining an \lcl{} that requires a solution to encode an \shortname{} execution as output needs a lot of care.
 Denote with $M$ an \shortname{}. At a high level, our \lcl{} problems are designed such that the encoding of the execution of $M$ needs to be done only on valid instances. In other words, the \lcl{} $\Pi_M^i$ will satisfy the following:
\begin{itemize}
	\item If the graph looks like a multidimensional grid with $i$ dimensions, then the output of some nodes that are part of some $2$-dimensional surface of the grid must properly encode the execution of $M$.
	\item Otherwise, nodes are exempted from encoding the execution of $M$, but they must produce as output a proof of the fact that the instance is invalid.
\end{itemize}
This second point is somehow delicate, as nodes may try to ``cheat'' by claiming that a valid instance is invalid. Also, recall that in an $\lcl{}$ it has to be possible to check the validity of a solution \emph{locally}, that is, there must exist a constant time distributed algorithm such that, given a correct solution, it outputs \emph{accept} on all nodes, while given an invalid solution, it outputs \emph{reject} on at least one node. To deal with these issues, we define our \lcl{}s as follows:
\begin{itemize}
	\item Valid instances are multidimensional grids with inputs. This input is a locally checkable proof that the given instance is a valid one, that is, nodes can check in constant time if the input is valid, and if it is not valid, at least one node must detect an error (for more on locally checkable proofs we refer the reader to \citen{Goos2016}). On these valid instances, nodes can only output a correct encoding of $M$.
	\item On invalid instances, nodes must be able to prove that the input does not correctly encode a proof of the validity of the instance. This new proof must also be locally checkable.
\end{itemize}
Thus, we will define two kinds of locally checkable proofs (each using only a constant number of labels, since we will later need to encode them in the definition of the \lcl{}s): the first is given as input to nodes, and it should satisfy that only on valid instances all nodes see a correct proof, while on invalid instances at least one node sees some error; the second is given as output from nodes, and it should satisfy that all nodes see a correct proof only if there exists a node that, in constant time, notices some error in the first proof.

Hence, we will define \lcl{}s that are solvable on any graph by either encoding the execution of the \shortname{}, or by proving that the current graph is not a valid instance, where this last part is done by showing that the proof given as input is invalid on at least one node.

\subsection{Roadmap}
We will now describe the high level structure of this section. We will start by formally introducing Linear Bounded Automata in Section~\ref{ssec:lba}. 

We will then introduce multidimensional grids in Section~\ref{ssec:grids}: these will be the hardest instances for our \lcl{}s. These grids will be labelled with some input, and we will provide a set of local constraints for this input such that, if these constraints are satisfied for all nodes, then the graph must be a multidimensional grid of some desired dimension $i$ (or belong to a small class of grid-like graphs that we have to deal with separately). Also, for any multidimensional grid of dimension $i$, it should be possible to assign these inputs in a valid manner. In other words, we design a locally checkable proof mechanism for the family of graphs of our interest, and every node will be able to verify if constraints hold by just inspecting their $3$-radius neighbourhood, and essentially constraints will be valid on all nodes if and only if the graph is a valid multidimensional grid (Sections \ref{subsec:gridlab} and \ref{app:local-checkability}).

Next we will define what are valid outputs on multidimensional grids with the desired number $i$ of dimensions. The idea is that nodes must encode the execution of some \shortname{} $M$ on the surface spanned by $2$ dimensions. Nodes will be able to choose which dimension to use as the \emph{tape} dimension, but they will be forced to use dimension $1$ as the \emph{time} dimension. The reason why we do not allow nodes to choose both dimensions is that, in order to obtain complexities in the $\omega(\sqrt{n})$ spectrum, we will need the time dimension to be  $\omega(\sqrt{n})$, but in any grid with at least $3$ dimensions, the smallest two dimensions are always $O(\sqrt{n})$. For example, consider an \shortname{} $M$ that encodes a unary $5$-counter, that is, a list of $5$ unary counters, such that when one counter overflows, the next one is incremented. The running time of $M$ is $\Theta(B^5)$, where $B$ is the length of the tape. The worst case instance for the problem $\Pi^3_M$ will be a $3$-dimensional grid, where dimensions $2$ and $3$ will have equal size $n^{1/7}$ and dimension $1$ will have size $n^{5/7}$. In such an instance, nodes will be required to encode the execution of $M$ using either dimension $2$ or $3$ as tape dimension, and $1$ as time dimension---note that the size of dimension $1$ as a function of the size of dimension $2$ (or $3$) matches the running time of $M$ as a function of $B$. Thus, the complexity of $\Pi^3_M$ will be $\Theta(n^{5/7})$, as nodes will need to see up to distance $\Theta(n^{5/7})$ in dimension $2$ (or $3$). If we do not force nodes to choose dimension $1$ as time, then nodes can always find two dimensions of size $O(n^{1/2})$, and we would not be able to obtain problems with complexity $\omega(n^{1/2})$.

We will start by handling grids that are \emph{unbalanced} in a certain way, that is, where dimension $1$ is too small compared to all the others (Section~\ref{subsec:unbalanced}). In this case, deviating from the above, we allow nodes to produce some different output that can be obtained without spending much time (this is needed to ensure that our \lcl{}s do not get too hard on very unbalanced grids). Then, we define what the outputs must be on valid grids that are not unbalanced (Section~\ref{subsec:machineencoding}). Each node must produce an output such that the ensemble of the outputs of nodes encodes the execution of a certain \shortname{}. In particular, we define a set of output labels and a set of constraints, such that the constraints are valid on all nodes if and only if the output of the nodes correctly encodes the execution of the \shortname{}.

We define our \lcl{}s in Section~\ref{subsec:construction}. We provide a set of output labels, and constraints for these labels, that nodes can use to prove that the given graph is not a valid multidimensional grid, where the idea is that nodes can produce pointers that form a directed path that must end on a node for which the grid constraints are not satisfied. Our \lcl{} will then be defined such that nodes can either:
\begin{itemize}
	\item produce an encoding of the execution of the given \shortname, or
	\item prove that dimension $1$ is too short, or
	\item prove that there is an error in the grid structure.
\end{itemize}
All this must be done carefully, ensuring that nodes cannot claim that there is an error in valid instances, and always allowing nodes to produce a proof of an error if the instance is invalid. Also, we cannot require all nodes to consistently choose one of the three options, since that may require too much time. So we must define constraints such that, for example, it is allowed for some nodes to produce a valid encoding of the execution of the \shortname, and at the same time it is allowed for some other nodes to prove that there is an error in the input proof (that maybe the first group of nodes did not see).

Finally, in Section~\ref{ssec:timecomplexity} we will show upper and lower bounds for our \lcl{}s, and in Section~\ref{subsec:instantiate} we show how these results imply the existence of a family of \lcl{}s that have complexities in the range between $\omega(\sqrt{n})$ and $o(n)$.

\subparagraph{Remarks.}

To avoid confusion, we point out that we will (implicitly) present two very different distributed algorithms in this section:
\begin{itemize}
	\item First, we define a specific \lcl{} problem $\Pi_M^i$. Recall that any \lcl{} problem can be interpreted as a constant-time distributed algorithm $\A'$ that verifies that $(G,i,o)$ is a valid configuration. We do not give $\A'$ explicitly here, but we will present a list of constraints that can be checked in constant time by each node. This is done in Section~\ref{subsec:lcldef}.
	\item Second, we prove that the distributed complexity of $\Pi_M^i$ is $\Theta(t(n))$, for some $t$ between $\omega(\sqrt{n})$ and $o(n)$. To show this, we will need a pair of matching upper and lower bounds, and to prove the upper bound, we explicitly construct a distributed algorithm $\A$ that solves $\Pi_M^i$ in $O(t(n))$ rounds, i.e., $\A$ takes $(G,i)$ as input and produces some $o$ as output such that $(G,i,o)$ is a valid configuration that makes $\A'$ happy. This is done in Section~\ref{sssec:upperbound}.
\end{itemize}
Note that the specific details of $\Pi_M^i$ as such are not particularly interesting; the interesting part is that $\Pi_M^i$ is an \lcl{} problem (in the strict formal sense) and its distributed time complexity is between $\omega(\sqrt{n})$ and $o(n)$. As we will see in Section~\ref{sec:trees} such problems do not exist in trees.

\subsection{Linear bounded automata}\label{ssec:lba}
A \modelname{} (\shortname) $M$ consists of a Turing machine that is executed on a tape with bounded size, able to recognize the boundaries of the tape~\cite[p.~225]{HU79}. We consider a simplified version of \shortname{}s, where the machine is initialized with an empty tape (no input is present). We describe this simplified version of \shortname{}s as a $5$-tuple $M = (Q,q_0,f,\Gamma,\delta)$, where:
\begin{itemize}
	\item $Q$ is a finite set of states;
	\item $q_0 \in Q$ is the initial state;
	\item $f \in Q$ is the final state;
	\item $\Gamma$ is a finite set of tape alphabet symbols, containing a special symbol $b$ (\emph{blank}), and two special symbols, $L$ and $R$, called \emph{left} and \emph{right} markers;
	\item $\delta \colon Q\setminus\{f\} \times \Gamma \to Q \times \Gamma \times \{-,\leftarrow,\rightarrow\}$ is the transition function.
\end{itemize}
The tape is initialized in the following way:
\begin{itemize}
	\item the first cell contains the symbol $L$;
	\item the last cell contains the symbol $R$;
	\item all the other cells contain the symbol $b$.
\end{itemize}
The head is initially positioned on the cell containing the symbol $L$.  Then, depending on the current state and the symbol present on the current position of the tape head, the machine enters a new state, writes a symbol on the current position, and moves to some direction.

In particular, we describe the transition function $\delta$ by a finite set of $5$-tuples $(s_0,t_0,s_1,t_1,d)$, where:
\begin{enumerate}
	\item The first $2$ elements specify the input:
	\begin{itemize}
		\item $s_0$ indicates the current state;
		\item $t_0$ indicates the tape content on the current head position.
	\end{itemize} 
	\item The remaining $3$ elements specify the output:
	\begin{itemize}
		\item $s_1$ is the new state;
		\item $t_1$ is the new tape content on the current head position;
		\item $d$ specifies the new position of the head:
		\begin{itemize}
			\item `$\rightarrow$' means that the head moves to the next cell in direction towards $R$;
			\item `$\leftarrow$' indicates that the head moves to the next cell in direction towards $L$;
			\item `$-$' means the head does not move.
		\end{itemize}
	\end{itemize}
\end{enumerate}
The transition function must satisfy that it cannot move the head beyond the boundaries $L$ and $R$, and the special symbols $L$ and $R$ cannot be overwritten. If $\delta$ is not defined on the current state and tape content, the machine terminates.

By fixing a machine $M$ and by changing the size of the tape $B$ on which it is executed, we obtain different running times for the machine, as a function of $B$. We denote by $T_M(B)$ the \emph{running time} of an \shortname{} $M$ on a tape of size $B$. For example, it is easy to design a machine $M$ that implements a binary counter, that starts from a tape containing all $0$s and terminates when the tape contains all $1$s, and this machine has a running time $T_M(B) = \Theta(2^B)$.

Also, it is possible to define a \emph{unary $k$-counter}, that is, a list of $k$ unary counters (where each one counts from $0$ to $B-1$ and then overflows and starts counting from $0$ again) in which when a counter overflows, the next is incremented. It is possible to obtain running times of the form $T_M(B) = \Theta(B^k)$ by carefully implementing these counters (for example by using a single tape of length $B$ to encode all the $k$ counters at the cost of using more machine states and tape symbols).

The reason why we consider \shortname{}s is that they fit nicely with the \lcl{} framework, that requires local checkability using a constant number of labels. The definition of an \shortname{} $M$ does not depend on the tape size, that is, the description of $M$ is constant compared to $B$. Also, by encoding the execution of $M$ using two dimensions, one for the tape and the other for time, we obtain a structure that is locally checkable: the correctness of each constant size neighbourhood of this two dimensional surface implies global correctness of the encoding.

\subsection{Grid structure}\label{ssec:grids}
In order to obtain \lcl{}s for the general graphs setting, we need our \lcl{}s to be defined on any (bounded-degree) graph, and not only on some family of graphs of our interest. That is, we cannot assume any promise on the family of graphs where the \lcl{} requires to be solved. The challenge here is that we can easily encode \shortname{}s only on grids, but not on general graphs. 

Thus, we will define our \lcl{}s in a way that there is a family of graphs, called \emph{valid instances}, where nodes are actually required to output the encoding of the execution of a specific \shortname{}, while on other instances nodes are exempted from doing so, but in this case they are required to prove that the graph is not a valid instance. The intuition here is that valid instances will be \emph{hard} instances for our \lcl{}s, meaning that when we will prove a lower bound for the time complexity of our \lcl{}s we will use graphs that are part of the family of valid instances. Then, when we will prove upper bounds, we will show that our \lcl{}s are always solvable, even in graphs that are invalid instances, and the time required to solve the problem in these instances is not more than the time required to solve the problem in the lower bound graphs that we provided.

We will now make a first step in defining the family of valid instances, by formally defining what a grid graph is.

Let $i \ge 2$ and $d_1,\ldots,d_i$ be positive integers.
The set of nodes of an \emph{$i$-dimensional grid graph} $\G$ consists of all $i$-tuples $u=(u_1,\ldots,u_i)$ with $0 \le u_j \le d_j$ for all $1 \le j \le i$. 
We call $u_1,\ldots,u_i$ the \emph{coordinates} of node $u$ and $d_1,\ldots,d_i$ the \emph{sizes} of the \emph{dimensions} $1,\ldots,i$. Let $u$ and $v$ be two arbitrary nodes of $\G$. There is an edge between $u$ and $v$ if and only if $||u - v||_1=1$, i.e., all coordinates of $u$ and $v$ are equal, except one that differs by $1$. Figure~\ref{fig:grid2} depicts a grid graph with $3$ dimensions.

Notice that nodes do not know their position in the grid, or, for incident edges, which dimension they belong to. In fact, nodes do not even know if the graph where they currently are is actually a grid!
At the beginning nodes only know the size $n$ of the graph, and everything else must be discovered by exploring the neighbourhood.

\begin{figure}
	\centering
	\includegraphics[width=0.45\textwidth]{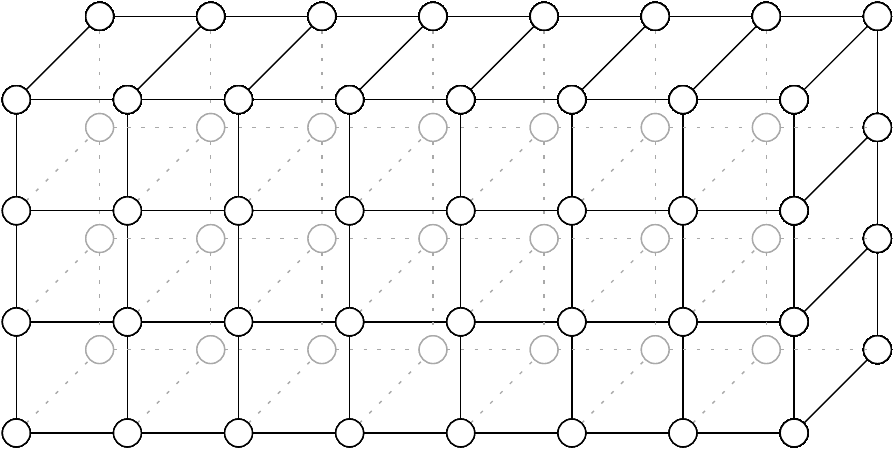}
	\caption{An example of a 3-dimensional grid graph.}
	\label{fig:grid2}
\end{figure}

\subsubsection{Grid labels}\label{subsec:gridlab}
As previously discussed, \lcl{}s must be well defined on any (bounded-degree) graph, and we want to define our \lcl{}s such that, if a graph is not a valid instance, then it must be easy for nodes to prove such a claim, where easy here means that the time required to produce such a proof is not larger than the time required to encode the execution of the machine in the worst possible valid instance of the same size. For this purpose, we need to \emph{help} nodes to produce such a proof. The idea is that a valid instance not only must be a valid grid graph, but it must also include a proof of being a valid grid graph. Thus, we will define a constant size set of labels, that will be given as input to nodes, and a set of constraints, such that, if a graph is a valid grid graph, then it is possible to assign a labelling that satisfies the constraints, while if the graph is not a valid grid graph, then it should be not possible to assign a labelling such that constraints are satisfied on all nodes (at least one node must note some inconsistency). Informally, in the \lcl{}s that we will define, such a labelling will be provided as input to the nodes, and nodes will be able to prove that a graph is invalid by just pointing to a node where the input labelling does not satisfy these constraints.

For the sake of readability, instead of defining a set of labels and a set of local constraints that characterize valid grid graphs, we start by defining what is a valid label assignment to grid graphs, in a non-local manner. Then, we will show how to turn this to a set of locally checkable constraints. Unfortunately, we will not be able to prove that if labels satisfy these local constraints on all nodes, then the graph is actually a grid. Instead, the set of graphs that satisfy these constraints for all nodes will include a small family of additional graphs, that are graphs that locally look like a grid everywhere, but globally they are not valid grid graphs. For example, toroidal grids will be in this family. As we will show, the weaker statement that we prove will be enough for our purposes.

We now present a valid labelling for valid grid graphs. Each edge $e=\{u, v\}$ is assigned two labels $L_u(e)$ and $L_v(e)$, one for each endpoint. Label $L_u(e)$ is chosen as follows:
\begin{itemize}
	\item $L_u(e) = (\dirnext,j)$ if $v_j - u_j = 1$;
	\item $L_u(e)= (\dirprev,j)$ if $u_j - v_j = 1$.
\end{itemize}
Label $L_v(e)$ is chosen analogously. We define $\lsetgrid$ to be the set of all possible input labels, i.e.,
\[
	\lsetgrid =
		\bigl\{ (\dirprev,j) \bigm| 1 \leq j \leq i \bigr\} \cup
		\bigl\{ (\dirnext,j) \bigm| 1 \leq j \leq i \bigr\}.
\]

If we want to focus on a specific label of some edge $e$ and it is clear from the context which of the two edge labels is considered, we may refer to it simply as the label of $e$.

We call the unique node that does not have any incident edge labelled $(\dirprev,j)$ \emph{origin}.
Equivalently, we could define the origin directly as the node $(0, 0, \dots, 0)$, but we want to emphasize that each node of a grid graph can infer whether it is the origin, simply by checking its incident labels.

In Section~\ref{subsec:lcldef}, the defined grid labels will appear as edge input labels in the definition of the new \lcl{} problems we design.
In the formal definition of an \lcl{} problem (see Section~\ref{subsec:def-lcl}), input labels are assigned to \emph{nodes}; however this is not a problem: that we label \emph{edges} in our grid graphs is just a matter of convenience; we could equally well assign the labels to nodes instead of edges (and, for that matter, combine all labels of a node into a single label).
The same holds for the \emph{output} labels that are part of the definitions of the \lcl{} problems in Section~\ref{subsec:lcldef}.
Furthermore, we could also equally well encode the labels in the graph structure.
Hence, all new time complexities presented in Section~\ref{subsec:instantiate} can also be achieved by \lcl{} problems without input labels (a family of problems frequently considered in the \local{} model literature).
From now on, with \emph{grid graph} we denote a grid graph \emph{with a valid labelling}.

\subsubsection{Local checkability}\label{app:local-checkability}
As previously discussed, we want to make sure that if the graph is not a valid grid graph, then at least one node can detect this issue in constant time. Hence, we are interested in a \emph{local characterisation} of grid graphs.
Given such a characterisation, each node can check locally whether the input graph has a valid grid structure in its neighbourhood.
As it turns out, such a characterization is not possible, since there are non-grid graphs that look like grid graphs locally everywhere, but we can come sufficiently close for our purposes.
In fact, we will specify a set of local constraints that characterise a class of graphs that contains all grid graphs of dimension $i$ (and a few other graphs).
All the constraints depend on the $3$-radius neighbourhood of the nodes, so for each input graph not contained in the characterised graph class, at least one node can detect in $3$ rounds (in the \local{} model) that the graph is not a grid graph.

For any node $v$ and any sequence $L_1, L_2, \dotsc, L_p$ of edge labels, let $z_v(L_1,L_2,\dotsc,L_p)$ denote the node reached by starting in $v$ and following the edges with labels $L_1,L_2,\dotsc,L_p$.
If at any point during traversing these edges there are $0$ or at least $2$ edges with the currently processed label, $z_v(L_1,L_2,\dotsc,L_p)$ is not defined (this may happen, since nodes need to be able to check if the constraints hold also on graphs that are invalid grid graphs).
Let $i \ge 2$. The full constraints are given below:
\begin{enumerate}
	\item Basic properties of the labelling. For each node $v$ the following hold:\label{item:link-first} 
	\begin{itemize}
		\item Each edge $e$ incident to $v$ has exactly one ($v$-sided) label $L_v(e)$, and for some $1 \le j \le i$ we have
		\[L_v(e) \in \bigl\{ (\dirprev,j),\, (\dirnext,j) \bigr\}.\]
		\item For any two edges $e, e'$ incident to $v$, we have \[L_v(e) \neq L_v(e').\]
		\item For any $1 \le j \le i$, there is at least one edge $e$ incident to $v$ with
		\[L_v(e) \in \bigl\{ (\dirprev,j),\, (\dirnext,j) \bigr\}.\]
	\end{itemize}
	\item Validity of the grid structure. For each node $v$ the following hold:\label{item:link-last}
	\begin{itemize}
		\item For any incident edge $e=\{ v, u \}$, we have that
		\begin{align*}
		L_u(e) = (\dirprev,j) &\text{ if } L_v(e) = (\dirnext,j), \\
		L_u(e) = (\dirnext,j) &\text{ if } L_v(e) = (\dirprev,j).
		\end{align*}
		\item Let $1 \le j, k \le i$ such that $j \neq k$. Also, let $e=\{ v, u \}$ and $e'=\{ v, u' \}$ be edges with the following $v$-sided labels:
		\begin{align*}
		L_v(e) &\in \bigl\{(\dirprev,j),\, (\dirnext,j)\bigr\}, \\
		L_v(e') &\in \bigl\{(\dirprev,k),\, (\dirnext,k)\bigr\}.
		\end{align*}
		Then node $u$ has an incident edge $e''$ with label $L_u(e'') = L_v(e')$, and $u'$ has an incident edge $e'''$ with label $L_{u'}(e''') = L_v(e)$. Moreover, the two other endpoints of $e''$ and $e'''$ are the same node, i.e., $z_u(L_u(e'')) = z_{u'}(L_{u'}(e'''))$.
	\end{itemize}
\end{enumerate}
It is clear that $i$-dimensional grid graphs satisfy the given constraints, but as observed above, the converse statement is not true. (As a side note for the curious reader, we mention that the converse statement can be transformed into a correct (and slightly weaker) statement by adding the small (non-local) condition that the considered graph contains a node not having any incident edge labelled with some $(\dirprev,j)$, for all dimensions $j$. However, due to its non-local nature, we cannot make use of such a condition.)

\subsubsection{Unbalanced grid graphs}\label{subsec:unbalanced}
In Section~\ref{app:local-checkability}, we saw the basic idea behind ensuring that non-grid graphs are not among the hardest instances for the \lcl{} problems we construct.
In this section, we will study the ingredient of our \lcl{} construction that guarantees that grid graphs where the dimensions have ``wrong'' sizes are not worst-case instances.
More precisely, we want that the hardest instances for our \lcl{} problems are grid graphs with the property that there is at least one dimension $2 \le j \le i$ whose size is not larger than the size of dimension $1$.
In the following, we will show how to make sure that \emph{unbalanced} grid graphs, i.e., grid graphs that do not have this property, allow nodes to find a valid output without having to see too far.
In a sense, in the \lcl{}s that we construct, one possible valid output is to produce a proof that the grid is unbalanced in a wrong way, and since the validity of an output assignment for an \lcl{} must be locally checkable, we want such a proof to be locally checkable as well. 

Thus, in the \lcl{}s that we will define, nodes of an arbitrary graph will be provided with some input labelling that encodes a (possibly wrong) proof that claims that the current graph is a valid grid graph. Then, if the graph does not look like a grid, nodes can produce a locally checkable proof that claims that this input proof is wrong. Instead, if the graph does look like a grid, but this grid appears to be unbalanced in some undesired way, nodes can produce a locally checkable proof about this fact.

More formally, consider a grid graph with $i$ dimensions of sizes $d_1,\ldots,d_i$. If $d_1 < d_j$ for all $2 \le j \le i$, the following output labelling is regarded as correct in any constructed \lcl{} problem:
\begin{itemize}
	\item For all $0 \le t \le d_1$, node $v = (v_1,\ldots, v_i)$ satisfying $v_1 = \dotsc = v_i = t$ is labelled $\lunbalanced{}$.
	\item All other nodes are labelled $\lexemptu{}$.
\end{itemize}
This labelling is clearly locally checkable, i.e., it can be described as a collection of local constraints:
Each node $v$ labelled $\lunbalanced{}$ checks that
\begin{enumerate}
	\item its two ``diagonal neighbours''
	\begin{align*}
	u &= z_v((\dirprev,1), (\dirprev,2), \dots, (\dirprev,i)) \mbox{ and} \\
	w &= z_v((\dirnext,1), (\dirnext,2), \dots, (\dirnext,i)),
	\end{align*} 
	both exist (i.e., are defined) and are both labelled $\lunbalanced{}$, or
	\item $w$ exists and is labelled $\lunbalanced{}$ and $v$ has no incident edge labelled $(\dirprev,j)$, or
	\item $u$ exists and is labelled $\lunbalanced{}$ and $v$ has an incident edge labelled $(\dirnext,j)$ for all $2 \le j \le i$, but no incident edge labelled $(\dirnext,1)$\label{constraint:unbalanced}.
\end{enumerate} 
Condition \ref{constraint:unbalanced} ensures that the described diagonal chain of $\lunbalanced{}$ labels terminates at a node whose first coordinate is $d_1$ (i.e., the maximal possible value for the coordinate corresponding to dimension $1$), but whose second, third, \ldots, coordinate is strictly smaller than $d_2, d_3, \dots$, respectively, thereby guaranteeing that grid graphs that are not unbalanced do not allow the output labelling specified above. Finally, the origin checks that it is labelled $\lunbalanced{}$, in order to prevent the possibility that each node simply outputs $\lexemptu{}$. We refer to Figure~\ref{fig:unbalanced} for an example of an unbalanced $2$-dimensional grid and its labelling. We define $\lsetunbalanced$ to be the set containing $\{ \lunbalanced, \lexemptu \}$.

\begin{figure}
	\centering
	\includegraphics[width=0.25\textwidth]{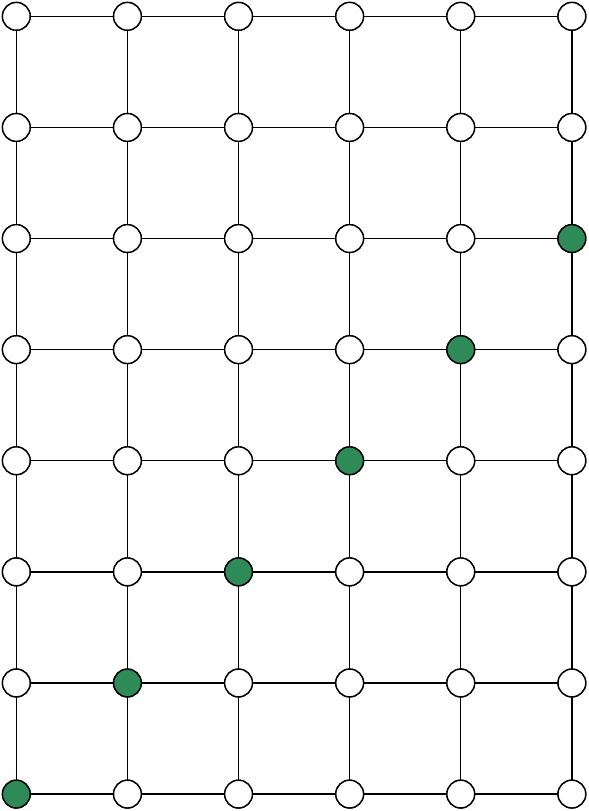}
	\caption{An example of an unbalanced grid with $2$ dimensions; the nodes in green are labelled with \lunbalanced{}, while the white nodes are labelled with \lexemptu.}
	\label{fig:unbalanced}
\end{figure}

\subsection{Machine encoding}\label{subsec:machineencoding}
After examining the cases of the input graph being a non-grid graph or an unbalanced grid graph, in this section, we turn our attention towards the last remaining case: that is the input graph is actually a grid graph for which there is a dimension with size smaller than or equal to the size of dimension $1$.
In this case, we require the nodes to work together to create a global output that is determined by some \shortname{} $M$.
Essentially, the execution of $M$ has to be written (as node outputs) on a specific part of the grid graph.
In order to formalise this relation between the desired output and the \shortname{} $M$, we introduce the notion of an \emph{$M$-encoding graph} in the following section.

\subsubsection{Labels}
Let $M$ be an \shortname{}, and consider the execution of $M$ on a tape of size $B$. Let $S_\ell = (s_\ell,h_\ell,t_\ell)$ be the whole state of $M$ after step $\ell$, where $s_\ell$ is the machine internal state, $h_\ell$ is the position of the head, and $t_\ell$ is the whole tape content. The content of the cell in position $y \in \{ 0, \dotsc, B-1 \}$ after step $\ell$ is denoted by $t_\ell[y]$. We denote by $(x,y)_k$ the node $v=(v_1,\ldots, v_i)$ having $v_1 = x$, $v_k = y$, and $v_j = 0$ for all $j \not\in \{1,k\}$. An (output-labelled) grid graph of dimension $i$ is an \emph{$M$-encoding graph} if there exist a tape size $B$ and a dimension $2 \le k \le i$ satisfying the following conditions. 
\begin{description}
	\item[(C1)] $d_k$ is equal to $B-1$.
	\item[(C2)] For all $0 \le x \le \min\{ T_M(B), d_1 \}$ and all $0 \le y \le B-1$, it holds that:
	\begin{description}
		\item[(a)] Node $(x,y)_k$ is labelled with $\ltape(t_x[y])$.
		\item[(b)] Node $(x,y)_k$ is labelled with $\lstate(s_x)$.
		\item[(c)] Node $(x,h_x)_k$ is labelled with $\lhead$.
		\item[(d)] Node $(x,y)_k$ is labelled with $\ldimension(k)$.
	\end{description}
	\item[(C3)] All other nodes are labelled with $\lfinished$.
\end{description}
Intuitively, the $2$-dimensional surface expanding in dimensions $1$ and $k$ (having all the other coordinates equal to $0$), encodes the execution of $M$.
More precisely, the nodes $(x,0)_k, (x,1)_k, \dots, (x,B-1)_k$ together represent the state of the tape at time $x$, i.e., dimension $1$ constitutes the time axis whereas the tape itself is unrolled along dimension $k$.
In particular, the nodes $(0,1)_k, (0,2)_k, \dots, (0, B-2)_k$ representing the (inner part of the) tape at the beginning of the computation are all labelled with the blank symbol $b$ (or, if we want to be very precise, $\ltape(b)$), the nodes $(0,0)_k, (0,1)_k, \dots$ representing the left end of the tape (at different points in time) are labelled with the left marker $L$, and the nodes $(B-1,0)_k, (B-1,1)_k, \dots$ representing the right end of the tape are labelled with the right marker $R$.  We define $\lbalabels$ to be the set of all possible output labels used to define an $M$-encoding graph.

\subsubsection{Local checkability}\label{subsec:locallba}
In order to force nodes to output labels that turn the input grid graph into an $M$-encoding graph, we must be able to describe Conditions (C1)--(C3) in the form required by an \lcl{} problem, i.e., as locally checkable constraints.
In the following, we provide such a description, showing that the nodes can check whether the graph correctly encodes the execution of a given \shortname{}~$M$. 

\subparagraph{Constraint (LC1):} Each node $v$ is labelled with either $\lfinished$ or $\ldimension(k)$ for exactly one $2 \le k \le i$. In the former case, node $v$ has no other labels, in the latter case, $v$ additionally has some $\ltape$ and some $\lstate$ label, and potentially the label $\lhead$, but no other labels.

\subparagraph{Constraint (LC2):} The origin has label $\ldimension(k)$, for some $2\le k \le i$.

\subparagraph{Constraint (LC3):} If a node $v$ labelled $\ldimension(k)$ for some $2\le k \le i$ has an incident edge $e$ labelled with $L_v(e) = (\dirprev,j)$, then $j=1$ or $j=k$. Moreover, for each node $v$ labelled $\ldimension(k)$, nodes $z_v((\dirprev,1))$, $z_v((\dirprev,k))$ and $z_v((\dirnext,k))$ (provided they are defined) are also labelled $\ldimension(k)$.

\subparagraph{Constraint (LC4):} For each node $v$ labelled $\ldimension(k)$ for some $2\le k \le i$, the following hold:
	\begin{enumerate}
		\item If $v$ does not have an incident edge labelled $(\dirprev,1)$, then
		\begin{description}
			\item[(a)] if $v$ does not have an incident edge labelled $(\dirprev,k)$, then it must have labels $\lhead$ and $\ltape(L)$;
			\item[(b)] if $v$ does not have an incident edge labelled $(\dirnext,k)$, then it must have label $\ltape(R)$;
			\item[(c)] if $v$ has an incident edge labelled $(\dirprev,k)$ and an incident edge labelled $(\dirnext,k)$, then it has label $\ltape(b)$;
			\item[(d)] $v$ has label $\lstate(q_0)$;
			\item[(e)] if $q_0 \neq f$, then node $z_v((\dirnext,1))$ (if defined) is labelled $\ldimension(k)$.
		\end{description}
		\item If $v$ has an incident edge labelled $(\dirprev,1)$, $v$ has labels $\lstate(q)$ and $\ltape(t)$, and $z_v((\dirprev,1))$ has labels $\lstate(q')$ and $\ltape(t')$, then
		\begin{description}
			\item[(f)] $q' \neq f$;
			\item[(g)] if $z_v((\dirprev,1))$ is labelled with $\lhead$, then $q$ and $t$ are derived from $q'$ and $t'$ according to the specifications of the \shortname{} $M$, and the new position of the head is either on $v$ itself, or on $z_v((\dirprev,k))$, or on $z_v((\dirnext,k))$, depending on $M$;
			\item[(h)] otherwise, $t=t'$ and the nodes $z_v((\dirprev,k))$ and $z_v((\dirnext,k))$ (if defined) are labelled $\lstate(q)$;
			\item[(i)] if $q \neq f$, then node $z_v((\dirnext,1))$ (if defined) is labelled $\ldimension(k)$.
		\end{description}
	\end{enumerate}

\subparagraph{Correctness.} It is clear that an $M$-encoding graph satisfies Constraints (LC1)--(LC4). Conversely, we want to show that any graph satisfying Constraints  (LC1)--(LC4) is an $M$-encoding graph. We start by setting $B := d_k+1$, which already implies Condition (C1).

\emph{Constraints (LC1)--(LC3)} ensure that there is a $2$-dimensional surface $\mathcal{S}$ on which the execution of $M$ is encoded: (LC3) ensures that for any node labelled $\ldimension(k)$, all coordinates except potentially those corresponding to dimension $1$ and $k$ are $0$, i.e., each node labelled $\ldimension(k)$ is of the form $(x,y)_k$ for some $x,y$. Moreover, according to (LC3), whenever some $(x,y)_k$ is labelled $\ldimension(k)$, then also all $(x', y')_k$ that satisfy $x' \leq x$ are labelled $\ldimension(k)$ and, in particular, the origin is also labelled $\ldimension(k)$. Since, by (LC1), no node (in particular, the origin) is labelled $\ldimension(k)$ for more than one $k$, it follows that there is at most one $k$ for which there exist nodes with label $\ldimension(k)$, and these nodes are exactly the nodes $(x,y)_k$ for which $x$ does not exceed some threshold value (which as we will see will be exactly $\min\{ T_M(B), d_1 \}$). (LC2) ensures that this value is at least $1$; in particular, there are nodes that are not labelled $\lfinished$. (LC1) ensures that all nodes not labelled $\ldimension(k)$ are labelled $\lfinished$.

\emph{Constraints (LC4a)--(LC4d)} ensure that the \shortname{} $M$ is initialized correctly: (LC4a)--(LC4c) ensure that $(0,0)_k$ is labelled with $\ltape(L) = \ltape(t_0[0])$, $(0,y)_k$ is labelled with $\ltape(b) = \ltape(t_0[y])$ for each $1 \leq y \leq B-2$, and $(0,B-1)_k$ is labelled with $\ltape(R) = \ltape(t_0[B-1])$, which implies Condition (C2a) for $x=0$. Similarly, (LC4a) also ensures (C2c) for $x=0$, and (LC4d) ensures (C2b) for $x=0$.

\emph{Constraints (LC4e)--(LC4i)} ensure a correct execution of each step of $M$, and that nodes on $\mathcal{S}$ output $\lfinished$ only after the termination state of $M$ is reached: Constraints (LC4e) and (LC4i) ensure that the threshold value for $y$ up to which all $(x,y)_k$ are labelled with $\ldimension(k)$ is at least $T_M(B)$, unless $T_M(B) > d_1$, in which case the threshold value is $d_1$. (LC4f) ensures that the threshold value does not exceed $T_M(B)$, thereby implying Conditions (C2d) and (C3). Here we use the observation derived from (LC1) that all nodes not labelled $\ldimension(k)$ are labelled $\lfinished$. (LC4g) and (LC4h) imply that if the nodes $(x,0)_k, (x,1)_k, \dots, (x,B-1)_k$ encode the state of the computation after step $x$, and the corresponding machine internal state is not the final state, then also the nodes $(x+1,0)_k, (x+1,1)_k, \dots, (x+1,B-1)_k$ encode the state of the computation after step $x+1$. As we already observed above that (LC4a)--(LC4d) ensure that $(0,0)_k, (0,1)_k, \dots, (0,B-1)_k$ encode the initial state of the computation, we obtain by induction (and our obtained knowledge about the threshold value) that (C2a)--(C2c) hold for all $0 \le x \le \min\{ T_M(B), d_1 \}$.

\subsection{\boldmath\texorpdfstring{\lcl{}}{LCL} construction}\label{subsec:construction}
Fix an integer $i \ge 2$, and let $M$ be an \shortname{} with running time $T_M$. As we do not fix a specific size of the tape, $T_M$ can be seen as a function that maps the tape size $B$ to the running time $T_M(B)$ of the \shortname{} executed on a tape of size $B$. We now construct an \lcl{} problem $\Pi_M^i$ with complexity related to $T_M$. Note that $\Pi_M^i$ depends on the choice of $i$. The general idea of the construction is that nodes can either:
\begin{itemize}
	\item produce a valid encoding of the execution of $M$, or
	\item prove that dimension $1$ is too short, or
	\item prove that there is an error in the (grid) graph structure.
\end{itemize}
We need to ensure that on balanced grid graphs it is not easy to claim that there is an error, while allowing an efficient solution on invalid graphs, i.e., graphs that contain a local error (some invalid label), or a global error (a grid structure that wraps, or dimension $1$ too short compared to the others). 

\subsubsection{\boldmath\texorpdfstring{\lcl{}}{LCL} problem \texorpdfstring{$\Pi_M^i$}{Pi\^{}i\_M}}\label{subsec:lcldef}

Formally, we specify the \lcl{} problem $\Pi_M^i$ as follows. The input label set for $\Pi_M^i$ is the set \lsetgrid{} of labels used in the grid labelling (see Section~\ref{subsec:gridlab}). The possible output labels are the following:
\begin{enumerate}
	\item The labels from $\lbalabels$ (see Section~\ref{subsec:machineencoding}).
	\item The labels from $\lsetunbalanced$ (see Section~\ref{subsec:unbalanced}).
	\item The set of \emph{error labels} $\lseterror$. This set is defined to contain the \emph{error label} $\error$, and \emph{error pointers}, i.e., all possible pairs $(s, r)$, where $s$ is either $(\dirnext,j)$ or $(\dirprev,j)$ for some $1\le j \le i$, and $r \in \{ 0, 1 \}$ is a bit whose purpose it is to distinguish between two different types of error pointers, \emph{type 0} pointers and \emph{type 1} pointers.
\end{enumerate}
Intuitively, nodes that notice that there is/must be an error in the grid structure, but are not allowed to output $\error$ because the grid structure is valid \emph {in their local neighbourhood}, can point in the direction of an error.
However, the nodes have to make sure that the error pointers form a chain that actually ends in an error.
In order to make the proofs in this section more accessible, we distinguish between the two types of error pointers mentioned above; roughly speaking, type 0 pointers will be used by nodes that (during the course of the algorithm) cannot see an error in the grid structure, but notice that the grid structure wraps around in some way, while type 1 pointers are for nodes that can actually see an error.
Here, with ``wrapping around", we mean that there is a node $v$ and a sequence of edge labels $L_1, L_2, \dots, L_p$ such that
\begin{enumerate}
\item there exists a dimension $j$ such that the number of labels $(\dirprev,j)$ in this sequence is different from the number of labels $(\dirnext,j)$, and
\item $z_v(L_1, L_2, \dots, L_p) =v$, i.e., we can walk from some node to itself without going in each dimension the same number of times in one direction as in the other.
\end{enumerate}
If the grid structure wraps around, then there must be an error somewhere (and nodes that see that the grid structure wraps around know where to point their error pointer to), or following an error pointer chain results in a cycle; however, since the constraints we put on error pointer chains are \emph{local} constraints (as we want to define an \lcl{} problem), the global behaviour of the chain is irrelevant.
We will not explicitly prove the global statements made in this informal overview; for our purposes it is sufficient to focus on the local views of nodes.

Note that if a chain of type 0 error pointers does not cycle, then at some point it will turn into a chain of type 1 error pointers, which in turn will end in an error.
Chains of type 1 error pointers cannot cycle.
We refer to Figure~\ref{fig:errorPointer} for an example of an error pointer chain.

\begin{figure}
	\centering
	\includegraphics[width=0.35\textwidth]{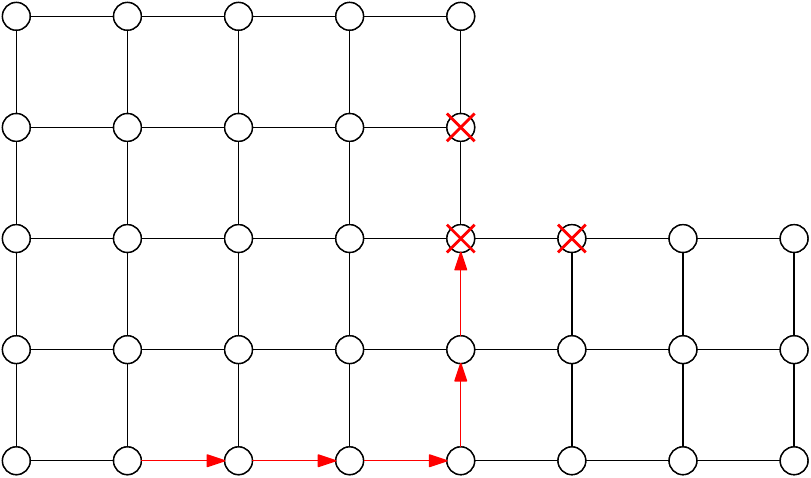}
	\caption{An example of an error pointer chain (shown in red). Nodes that are marked with a red cross are those who actually see an error in the grid structure. The output of only some of the depicted nodes is shown.}
	\label{fig:errorPointer}
\end{figure}

An output labelling for problem $\Pi_M^i$ is correct if the following conditions are satisfied.
\begin{enumerate}
	\item Each node $v$ produces at least one output label. If $v$ produces at least two output labels, then all of $v$'s output labels are contained in $\lbalabels \setminus \{ \lfinished \}$.
	\item Each node at which the input labelling does not satisfy the local grid graph constraints given in Section~\ref{app:local-checkability} outputs \error{}. All other nodes do not output \error{}.
	\item  If a node $v$ outputs $\lexemptu$ or $\lfinished$, then $v$ has at least one incident edge $e$ with input label $L_v(e)=(\dirprev,j)$, where $j\in \{1, \dots, i\}$.
	\item If the output labels of a node $v$ are contained in $\lbalabels \setminus \{ \lfinished \}$, then either there is a node in $v$'s $2$-radius neighbourhood that outputs a label from \lseterror, or the output labels of all nodes in $v$'s $2$-radius neighbourhood are contained in $\lbalabels$. Moreover, in the latter case $v$'s $2$-radius neighbourhood has a valid grid structure and the local constraints of an $M$-encoding graph, given in Section~\ref{subsec:locallba}, are satisfied at $v$.
	\item If the output of a node $v$ is $\lunbalanced{}$, then either there is a node in $v$'s $i$-radius neighbourhood that outputs a label from \lseterror, or the output labels of all nodes in $v$'s $i$-radius neighbourhood are contained in \lsetunbalanced. Moreover, in the latter case $v$'s $i$-radius neighbourhood has a valid grid structure and the local constraints for a proof of unbalance, given in Section~\ref{subsec:unbalanced}, are satisfied at $v$.
	\item Let $v$ be a node that outputs an error pointer $(s, r)$. Then $z_v(s)$ is defined, i.e., there is exactly one edge incident to $v$ with input label $s$. Let $u$ be the neighbour reached by following this edge from $v$, i.e., $u = z_v(s)$. Then $u$ outputs either $\error$ or an error pointer $(s', r')$, where in the latter case the following hold:
	\begin{itemize}
		\item $r' \ge r$, i.e., the type of the pointer cannot decrease when following a chain of error pointers;
		\item if $r' = 0 = r$, then $s' = s$, i.e., the pointers in a chain of error pointers of type 0 are consistently oriented;
		\item if $r' = 1 = r$ and
		\begin{align*}
		s &\in \bigl\{ (\dirprev,j),\, (\dirnext,j) \bigr\}, \\
		s' &\in \bigl\{ (\dirprev,{j'}),\, (\dirnext,{j'}) \bigr\},
		\end{align*}
		then $j' \ge j$, i.e., when following a chain of error pointers of type 1, the dimension of the pointer cannot decrease;
		\item if $r' = 1 = r$ and
		\[s, s' \in \bigl\{ (\dirprev,j),\, (\dirnext,j) \bigr\}\]
		for some $1 \le j \le i$, then $s' = s$, i.e., any two subsequent pointers in the same dimension have the same direction.
	\end{itemize}
\end{enumerate}
These conditions are clearly locally checkable, so $\Pi_M^i$ is a valid \lcl{} problem.

\subsection{Time complexity}\label{ssec:timecomplexity}
Let $M$ be an \shortname{}, $i \geq 2$ an integer, and $B$ the smallest positive integer satisfying $n \le B^{i-1} \cdot T_M(B)$. We will only consider \shortname{}s $M$ with the property that $B \le T_M(B)$ and for any two tape sizes $B_1 \ge B_2$ we have $T_M(B_1) \ge T_M(B_2)$.
In the following, we prove that $\Pi_M^i$ has time complexity $\Theta(n / B^{i-1}) = \Theta(T_M(B))$.
\subsubsection{Upper bound}\label{sssec:upperbound}
In order to show that $\Pi_M^i$ can be solved in $O(T_M(B))$ rounds, we provide an algorithm $\A$ for $\Pi_M^i$. Subsequently, we prove its correctness and that its running time is indeed $O(T_M(B))$. Algorithm $\A$ proceeds as follows.

First, each node $v$ gathers its constant-radius neighbourhood, and checks whether there is a local error in the grid structure at $v$, i.e., if constraints given in Section~\ref{app:local-checkability} are not satisfied. In that case, $v$ outputs $\error$. 
Then, each node $v$ that did not output $\error$ gathers its $R$-radius neighbourhood, where $R = c \cdot T_M(B)$ for a large enough constant $c \ge i$, and acts according to the following rules.
\begin{itemize}
	\item If there is a node labelled \error{} in $v$'s $R$-radius neighbourhood, then $v$ outputs an error pointer $(s, 1)$ of type 1, where $s \in \{ (\dirprev,j), (\dirnext,j) \}$ has the following property: among all shortest paths from $v$ to some node that outputs $\error$, there is one where the first edge $e$ on the path has input label $L_v(e) = s$, but, for any $j' < j$, there is none where the first edge $e$ has input label $L_v(e) \in \{ (\dirprev,{j'}), (\dirnext,{j'}) \}$.
	\item Now consider the case that there is no node labelled \error{} in $v$'s $R$-radius neighbourhood, but there is a path $P$ from $v$ to itself with the following property: Let $\mathcal L'$ be the sequence of labels read on the edges when traversing $P$, where for each edge $e = \{ u, w\}$ traversed from $u$ to $w$ we only read the label $L_u(e)$. Then there is some $1 \le j \le i$ such that the number of occurrences of label $(\dirprev,j)$ in $\mathcal L'$ is not the same as the number of occurrences of label $(\dirnext,j)$ in $\mathcal L'$. (In other words, the grid structure wraps around in some way.) Let $k$ be the smallest $j$ for which such a path $P$ exists. Then $v$ outputs an error pointer $(s, 0)$ of type 0, where $s = (\dirnext,{k})$.
	\item If the previous two cases do not apply (i.e., the input graph has a valid grid structure and does not wrap around, as far as $v$ can see), then $v$ checks for each dimension $1 \le j \le i$ whether in $v$'s $R$-radius neighbourhood there is both a node that does not have an incident edge labelled $(\dirprev,j)$ and a node that does not have an incident edge labelled $(\dirnext,j)$. (As we allow arbitrary input graphs, there could be several such node pairs.) For each dimension $j$ for which such two nodes exist, $v$ computes the size $d_j$ of the dimension by determining the distance between those two nodes w.r.t.\ dimension $j$, i.e., the absolute difference of the $j$-th coordinates of the two nodes. (Note that $v$ does not know the absolute coordinates, but can assign coordinates to the nodes it sees in a locally consistent manner, and that the absolute difference of the coordinates of those nodes does not depend on $v$'s choice as long as it is consistent.) Here, and in the following, $v$ assumes that the input graph also continues to be a grid graph outside of $v$'s $R$-radius neighbourhood. Then, $v$ checks whether among these $j$ there is a dimension $2 \le j' \le i$ with $d_{j'} \le T_M(B)$ that, in case $v$ actually computed the size of dimension $1$, also satisfies $d_{j'} \le d_1$. Now there are two cases:
	\begin{enumerate}
		\item If such a $j'$ exists, then $v$ chooses the smallest such $j'$ (breaking ties in a consistent manner), denoted by $k$, and computes its coordinate in dimension $k$. Node $v$ also computes its coordinate in dimension $1$ or verifies that it is larger than $T_M(B)$. Since $v$ can determine whether it has coordinate $0$ in all the other dimensions, it has all the information required to compute its output labels in the $M$-encoding graph where the execution of $M$ takes place on the surface that expands in dimensions $1$ and $k$. Consequently, $v$ outputs these labels (that is, labels from $\lbalabels$, defined in Section~\ref{subsec:machineencoding}). Note further that, according to the definition of an $M$-encoding graph, $v$ outputs $\lfinished$ if it verifies that its coordinate in dimension $1$ is larger than $T_M(B)$, even if it has coordinate $0$ in all dimensions except dimension $1$ and (possibly) $k$. Note that if the input graph does not continue to be a grid graph outside of $v$'s $R$-radius neighbourhood, then neighbours of $v$ might output error pointers, but this is still consistent with the local constraints of $\Pi_M^i$.
		\item If no such $j'$ exists, then, by the definition of $B$, node $v$ sees (nodes at) both borders of dimension $1$. In this case, $v$ can compute the label it would output in a proof of unbalance (that is, a label from \lsetunbalanced, defined in Section~\ref{subsec:unbalanced}), since for this, $v$ only has to determine whether its coordinates are the same in all dimensions (which is possible as all nodes with this property are in distance at most $i \cdot T_M(B)$ from the origin). Consequently, $v$ outputs this label. Again, if the input graph does not continue to be a grid graph outside of $v$'s $R$-radius neighbourhood, then, similarly to the previous case, the local constraints of $\Pi_M^i$ are still satisfied.
	\end{enumerate}
\end{itemize}

\begin{theorem}\label{thm:upper-bound}
	Problem $\Pi_M^i$ can be solved in $O(T_M(B))$ rounds.
\end{theorem}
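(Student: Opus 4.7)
The plan is to verify the two requirements separately: the running time bound, which follows almost immediately from the way $\A$ is defined, and the correctness of the output labelling, which requires checking that every one of the six conditions from Section~\ref{subsec:lcldef} is satisfied by the labels $\A$ produces. For the running time, I would simply observe that $\A$ first inspects a constant-radius neighbourhood, and then at most an $R$-radius neighbourhood with $R=c\cdot T_M(B)$; thus its round complexity is $O(T_M(B))$ as claimed.

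For correctness, I would partition nodes according to which of the four output ``modes'' they enter: outputting \error{}, outputting an error pointer (of type 0 or 1), outputting labels from $\lbalabels$, or outputting a label from $\lsetunbalanced$. Condition 2 is immediate since a node outputs \error{} exactly when the local grid constraints from Section~\ref{app:local-checkability} fail at it. Condition 1 requires checking that only $\ldimension(k)$, $\ltape(\cdot)$, $\lstate(\cdot)$ and $\lhead$ can be assigned simultaneously, and that happens only in the third case of the algorithm when $v$ writes an $M$-encoding label, at which point the collection of labels it assigns is precisely the one prescribed by the local rules of Section~\ref{subsec:locallba}. Condition 3 follows because $\lexemptu$ or $\lfinished$ are only written by nodes which are not the origin of a grid (they have a predecessor neighbour in some dimension). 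The remaining conditions 4, 5 and 6 require more careful case work.

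For conditions 4 and 5, I would argue \emph{consistency} of the local decisions: any two nodes $v$ and $v'$ within each other's $R$-radius neighbourhood gather overlapping views of the graph; if neither sees an \error{} label nor a wrap-around, the views agree on which dimensions have both borders visible, on the computed sizes $d_j$, and on the ordering used to break ties when selecting $k$. Consequently, within a $2$-radius (resp.\ $i$-radius) neighbourhood free of error pointers, all nodes agree on $k$, on their relative coordinates, and on whether we are in the $M$-encoding case or the unbalanced case; the local constraints of Section~\ref{subsec:locallba} and of Section~\ref{subsec:unbalanced} then hold because $\A$ labels each node exactly as those constraints dictate for the identified configuration. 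For condition 6, I would argue that if $v$ outputs a type-1 pointer $(s,1)$, then following $s$ leads to a neighbour $u$ that either lies one step closer to an \error{} node (so $u$ also outputs a type-1 pointer with the required dimension/direction monotonicity, imposed by $\A$'s lexicographic tie-breaking on shortest paths), or is itself an \error{} node; if $v$ outputs a type-0 pointer $(s,0)$, then $u$ either still sees the same wrap-around (hence points in the same direction $s$), or now sees an \error{} or wrap-around at a smaller dimension and switches to a type-1 pointer.

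The main obstacle will be showing this consistency rigorously, particularly that two nearby nodes in the third mode always choose the \emph{same} tape dimension $k$ and assign the same relative coordinates so that the $M$-encoding constraints line up, and that a type-1 pointer chain really does approach an \error{} node monotonically without getting stuck or violating the dimension/direction rules. I would handle the $k$-consistency by observing that $R\geq i\cdot T_M(B)$ ensures any dimension whose size is at most $T_M(B)$ is fully visible to $v$, so the set of candidate $j'$ is determined by the subgraph visible to both $v$ and $v'$; the pointer chain analysis reduces to noting that a shortest-path distance to an \error{} node strictly decreases along the chain, and $\A$'s lexicographic rule on first-edge labels enforces the dimension and direction monotonicity demanded by condition 6.
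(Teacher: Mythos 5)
Your structure matches the paper's proof well: same decomposition of nodes by output mode, same observations for the running time and for conditions 1--3, and a consistency argument for conditions 4 and 5 that, while less spelled out than the paper's, could be made to work. The genuine gap is in your treatment of condition 6, which is exactly where the paper does most of its work.

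For type-0 pointers, you write that when $v$ outputs $(s,0)$, its neighbour $u$ either ``still sees the same wrap-around (hence points in the same direction $s$)'' or ``sees an \error{} or wrap-around at a smaller dimension and switches to a type-1 pointer.'' The second alternative is wrong: by the specification of $\A$, seeing a wrap-around at a \emph{smaller} dimension produces another \emph{type-0} pointer with a different direction $s' \neq s$, which would violate condition 6 ($r' = r = 0$ forces $s' = s$). So you cannot allow this case; you must \emph{rule it out}. The paper does so with a two-sided path-rearrangement argument: the canonical wrap-around path at $v$ yields one at $u$ of the same dimension (hence $k' \le k$), and, symmetrically, any wrap-around path at $u$ yields one at $v$ with the same label multiset (hence $k \le k'$), giving $k' = k$ and therefore $s' = s$. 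Your proposal has no substitute for this and, as written, asserts something false.

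For type-1 pointers, appealing to ``$\A$'s lexicographic rule on first-edge labels enforces the dimension and direction monotonicity'' is an assertion of the conclusion rather than a proof: $v$'s lexicographic minimisation and $u$'s lexicographic minimisation are over \emph{different} sets of shortest paths (to possibly different closest \error{} nodes), so monotonicity of $k$ and of direction along the chain does not follow formally from either node's local choice alone. The paper proves this by contradiction: if $k' < k$, one can swap the first two edges of the concatenated path to obtain a length-$D$ path from $v$ starting with label $s'$, contradicting $v$'s minimality; if $k' = k$ and $s' \neq s$, then $z_u(s') = v$, so $D' = D + 1$, contradicting $D' = D - 1$. These edge-swap and backtracking arguments are the missing ingredients your proposal would need to supply.
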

\begin{proof}
	We will show that algorithm $\A$ solves problem $\Pi_M^i$ in $O(T_M(B))$ rounds. It is easy to see that the complexity of $\A$ is $O(T_M(B))$. We need to prove that it produces a valid output labelling for $\Pi_M^i$. For this, first consider the case that the input graph is a grid graph.
	Let $2 \le k \le i$ be the dimension with minimum size (apart, possibly, from the size of dimension $1$).
	If $d_k \le d_1$, then $d_k \le T_M(B)$, by the definition of $B$ and the assumption that $T_M(B) \ge B$.
	In this case, according to algorithm $\A$, the nodes output labels that turn the input graph into an $M$-encoding graph, thereby satisfying the local constraints of $\Pi_M^i$.
	If, on the other hand, $d_k > d_1$, then according to algorithm $\A$, the nodes output labels that constitute a valid proof for unbalanced grids, again ensuring that the local constraints of $\Pi_M^i$ are satisfied.
	
	If the input graph looks like a grid graph from the perspective of some node $v$ (but might not be a grid graph from a global perspective), then there are two possibilities: either the input graph also looks like a grid graph from the perspective of all nodes in $v$'s $2$-radius neighbourhood, in which case the above arguments ensure that the local constraints of $\Pi_M^i$ (regarding $M$-encoding labels, i.e., labels from $\lbalabels$) are satisfied at $v$, or some node in $v$'s $2$-radius neighbourhood notices that the input graph is not a grid graph, in which case it outputs an error pointer and thereby ensures the local correctness of $v$'s output.
	The same argument holds for the local constraints of $\Pi_M^i$ regarding labels for proving unbalance (instead of labels from $\lbalabels$), with the only difference that in this case we have to consider $v$'s $i$-radius neighbourhood (instead of $v$'s $2$-radius neighbourhood).
	
	What remains to show is that the constraints of $\Pi_M^i$ are satisfied at nodes $v$ that output $\error$ or an error pointer.
	If $v$ outputs $\error$ according to $\A$, then the constraints of $\Pi_M^i$ are clearly satisfied, hence assume that $v$ outputs an error pointer $(s,r)$.
	
	We first consider the case that $r=0$, i.e., $v$ outputs an error pointer of type $0$.
	In this case, according to the specifications of $\A$, there is no error in the grid structure in $v$'s $R$-radius neighbourhood.
	Let $u$ be the neighbour of $v$ the error pointer points to, i.e., the node reached by following the edge with label $s$ from $v$.
	Due to the valid grid structure around $v$, node $u$ is well-defined. 
	According to the specification of $\Pi_M^i$, we have to show that $u$ outputs an error pointer $(r', s')$ satisfying $r' = 1$ or $s' = s$.
	If there is a node in $u$'s $R$-radius neighbourhood that outputs $\error$, then $u$ outputs an error pointer of type 1, i.e., $r' = 1$.
	Thus, assume that there is no such node, which implies that the grid structure in $u$'s $R$-radius neighbourhood is valid as well.
	
	Consider a path from $v$ to itself inside $v$'s $R$-radius neighbourhood, and let $L_1, \dotsc, L_h$ be the sequence of edge labels read when traversing this path, where for each edge $e = \{ w, x \}$, we only consider the input label that belongs to the node from which the traversal of the edge starts, i.e., $L_w(e)$ if edge $e$ is traversed from $w$ to $x$.
	Then, due to the grid structure of $v$'s $R$-radius neighbourhood, there is such a path $P$ with the following property: for each $1 \le j \le i$, at most one of $(\dirprev,j)$ and $(\dirnext,j)$ is contained in the edge label sequence (as any two labels $(\dirprev,j)$ and $(\dirnext,j)$ ``cancel out''), and the edge label sequence (and thus the directions of the edges) is ordered non-decreasingly w.r.t. dimension, i.e., if $L_{h'} \in \{ (\dirprev,{j'}), (\dirnext,{j'}) \}$ and $L_{h''} \in \{ (\dirprev,{j''}), (\dirnext,{j''}) \}$ for some $1 \le h' \le h'' \le h$, then $j' \le j''$.
	Also, we can assume that the edge label $L_1$ of the first edge on $P$ is of the kind $(\dirnext,j)$ for some $j$ as we can reverse the direction of path $P$ and subsequently transform it into a path with the above properties by reordering the edge labels.
	Due to the specification of $\A$ regarding type 0 error pointer outputs and the above observations, we can assume that $L_1 = s$.
	
	Consider the path $P'$ obtained by starting at $u$ and following the edge label sequence $L_2, \dotsc, L_h, s$.
	Since $u = z_v(L_1) = z_v(s)$ and $v = z_v(L_1, \dotsc, L_h)$, we have that $u = z_u(L_2, \dotsc, L_h, s)$.
	Since $P$ is contained in the $R$-radius neighbourhood of $v$ (and $P$ has the nice structure outlined above), $P'$ is contained in the $R$-radius neighbourhood of $u$, thereby ensuring that $u$ outputs a type 0 error pointer.
	Let $k$ and $k'$ be the indices satisfying $(\dirnext,k) = s$ and $(\dirnext,{k'}) = s'$, respectively.
	Again due to the specification of $\A$ regarding type 0 error pointer outputs, we see that $k' \le k$.
	However, using symmetric arguments to the ones provided above, it is also true that for each path $P''$ from $u$ to itself of the kind specified above, there is a path from $v$ to itself that contains the same labels in the label sequence as $P''$ (although not necessarily in the same order), which implies that $k \le k'$.
	Hence, $k' = k$, and we obtain $s' = s$, as required.
	
	Now consider the last remaining case, i.e., that $v$ outputs an error pointer $(s,1)$ of type~$1$.
	Again, let $u$ be the neighbour of $v$ the error pointer points to, i.e., the node reached by following the edge with label $s$ from $v$.
	Let $D$ and $D'$ be the lengths of the shortest paths from $v$, resp.\ $u$ to some node that outputs $\error$.
	By the specification of $\A$ regarding type 1 error pointer outputs, we know that $D' = D - 1$, which ensures that $u$ outputs $\error$ or an error pointer of type 1.
	If $u$ outputs $\error$, then the local constraints of $\Pi_M^i$ are clearly satisfied at $v$.
	Thus, consider the case that $u$ outputs an error pointer $(s', 1)$ of type 1.
	Let $k$ and $k'$ be the indices satisfying $s \in \{ (\dirprev,{k}), (\dirnext,{k}) \}$ and $s' \in \{ (\dirprev,{k'}), (\dirnext,{k'}) \}$, respectively.
	We need to show that either $k'=k$ and $s'=s$, or $k' > k$.
	
	Suppose for a contradiction that either $k' = k$ and $s '\neq s$, or $k' < k$.
	Note that the latter case also implies $s' \neq s$.
	Consider a path $P'$ of length $D'$ from $u$ to some node $w$ outputting $\error$ with the property that the first edge $e'$ on $P'$ has input label $s'$.
	Such a path $P'$ exists by the specification of $\A'$.
	Let $P$ be the path from $v$ to $w$ obtained by appending $P'$ to the path from $v$ to $u$ consisting of edge $e = \{ v, u \}$.
	Note that $L_v(e) = s$.
	Since $v$ did not output $\error$, the local grid graph constraints, given in Section~\ref{app:local-checkability}, are satisfied at $v$.
	Hence, if $k' < k$, we can obtain a path $P''$ from $v$ to $w$ by exchanging the directions of the first two edges of $P$, i.e., $P''$ is obtained from $P$ by replacing the first two edges $e, e'$ by the edges $e'' = \{ v, z_v(s') \}, e''' = \{ z_v(s'), z_u(s') \}$.
	Note that $L_v(e'') = s'$ and $L_{z_v(s')}(e''') = s$.
	In this case, since $P''$ has length $D$ and starts with an edge labelled $s'$, we obtain a contradiction to the specification of $\A$ regarding error pointers of type 1, by the definitions of $k, k', s, s', D$. 
	Thus assume that $k' = k$ and $s '\neq s$.
	In this case, $z_v(s, s') = z_u(s') = v$ which implies that $D' = D + 1$, by the definitions of $D, D', P'$.
	This is a contradiction to the equation $D' = D - 1$ observed above.
	Hence, the local constraints of $\Pi_M^i$ are satisfied at $v$.
\end{proof}

\subsubsection{Lower bound}
\begin{restatable}{theorem}{thmlowerbound}\label{thm:lower-bound}
	Problem $\Pi_M^i$ cannot be solved in $o(T_M(B))$ rounds.
\end{restatable}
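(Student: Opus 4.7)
The plan is to exhibit a worst-case balanced grid instance on which every valid output must encode an execution of $M$, and then to apply a standard \local{}-model indistinguishability argument to rule out $o(T_M(B))$-round algorithms.

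I would take $G$ to be the $i$-dimensional grid graph with $d_1 = T_M(B)$ and $d_2 = \dots = d_i = B - 1$, equipped with the canonical input labelling from $\lsetgrid$; its number of nodes is $\Theta(n)$ and its natural tape-size parameter coincides with the $B$ in the statement up to an additive constant (which only changes $T_M(B)$ by a constant factor for every machine $M$ used in Section~\ref{subsec:instantiate}). A short case analysis based on the constraints of Section~\ref{subsec:lcldef} then shows that on $G$ the only valid outputs form a complete $M$-encoding: the input labelling is a locally perfect grid proof, so no node outputs $\error$ and any chain of error pointers would have to terminate at a non-existent $\error$-node; $\lunbalanced{}$ chains rooted at the origin cannot satisfy any of the conditions~(1)--(3) of Section~\ref{subsec:unbalanced} when $d_1 \ge d_j$ for every $j \ge 2$; and once the remaining nodes are forced into $\lbalabels{}$ labels, constraints (LC1)--(LC4) pin down their outputs as deterministic functions of their coordinates (modulo the global choice of a tape dimension $k$).

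Now suppose for contradiction that $\A$ solves $\Pi_M^i$ in $t = t(n) = o(T_M(B))$ rounds. I would pick two time coordinates $x_1, x_2$ with $t < x_1 < x_2 < T_M(B) - t$ and $x_2 - x_1 > 2t$ such that the $M$-encoding labels at the ``axis nodes'' $v_\ell := (x_\ell, 0, \dots, 0)$ differ between $\ell = 1$ and $\ell = 2$. A key observation is that $v_\ell$ is of the form $(x_\ell, 0)_k$ \emph{simultaneously} for every tape dimension $k \in \{2, \dots, i\}$, so its forced label --- namely $\{ \ltape(L), \lstate(s_{x_\ell}), \ldimension(k) \}$, augmented with $\lhead$ precisely when $h_{x_\ell} = 0$ --- depends on the algorithm's global choice of $k$ only through the $\ldimension(k)$ component, which is the same at $v_1$ and $v_2$. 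Such a pair $(x_1, x_2)$ exists whenever $(s_x, \mathbf{1}[h_x = 0])$ is not eventually constant, a mild property satisfied by every machine $M$ with superconstant $T_M(B)$. Moreover, the $t$-radius subgraph of $G$ rooted at $v_1$ is isomorphic, as a rooted graph with $\lsetgrid$ input labels, to the $t$-radius subgraph rooted at $v_2$ via the translation $\phi$ in dimension $1$ by $x_2 - x_1$: both $v_1$ and $v_2$ have coordinate $0$ in dimensions $2, \dots, i$ and distance strictly greater than $t$ from both boundaries of dimension $1$.

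I finish with the standard ID-swap argument. Fix any $O(\log n)$-bit ID assignment $\sigma_1$ of $G$ and construct a second assignment $\sigma_2$ satisfying $\sigma_2(\phi(u)) = \sigma_1(u)$ for every $u$ in the $t$-ball of $v_1$, extended injectively to the rest of $G$ using the polynomial ID space (this is feasible since only $O((2t+1)^i)$ IDs are being constrained). The input that $\A$ receives at $v_1$ under $\sigma_1$ is then literally identical, via $\phi$, to the input it receives at $v_2$ under $\sigma_2$, so the deterministic algorithm produces the same output label in both cases; but correctness demands these outputs to be the \emph{distinct} $M$-encoding labels at $v_1$ and $v_2$, yielding the desired contradiction. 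The main obstacle will be in the first step --- the bookkeeping needed to rule out every ``cheating'' output on $G$ (error pointers of both types, $\lunbalanced{}/\lexemptu{}$ chains, and partial $\lbalabels{}$ labellings); each sub-case is a routine inspection of the locally checkable constraints in Section~\ref{subsec:lcldef}, but the dimension-monotonicity argument for type-1 error pointers and the interplay between $\lfinished$ and $\lbalabels \setminus \{\lfinished\}$ outputs will require particular care.
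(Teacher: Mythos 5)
Your proof follows the same overall strategy as the paper's: construct the balanced grid instance with $d_1 = T_M(B)$ and $d_j \approx B$ for $j \geq 2$, show that error pointers, $\error$, and $\lunbalanced$ chains are all inadmissible so the only legal output is an $M$-encoding, then argue that producing the forced encoding requires $\Omega(T_M(B))$ rounds. The case analysis ruling out cheating outputs matches the paper, and your explicit ID-swap between two dimension-$1$-translates $v_1, v_2$ with isomorphic $t$-balls is actually a more careful rendering of the paper's one-sentence claim that a node ``needs to know if it is at distance $T_M(B)$ from the boundary''.

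However, there is a real gap in the step where you assert that suitable $x_1, x_2$ exist ``whenever $(s_x, \mathbf{1}[h_x=0])$ is not eventually constant, a mild property satisfied by every machine $M$ with superconstant $T_M(B)$.'' That last clause is false: a machine that sweeps the head rightward in a single state $q_0$ and halts on reading $R$ has $T_M(B) = \Theta(B)$ but $(s_x, \mathbf{1}[h_x=0]) = (q_0,0)$ for all $1 \le x < T_M(B)$, so no two axis nodes in the interior have distinct forced labels. By restricting attention to axis nodes $(x, 0, \dots, 0)$ you only see the labels $\ltape(L)$, $\lstate(s_x)$, $\lhead$-iff-$h_x{=}0$, and $\ldimension(k)$; the label that is \emph{guaranteed} to vary along dimension $1$ for any non-trivially-running, terminating machine is the tape content $\ltape(t_x[y])$ at some off-axis position $y$ (since a terminating machine must keep changing the configuration $(s_x, h_x, t_x)$, and the tape must eventually be altered). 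The fix is to run your ID-swap on nodes $(x_1, y)_k$ and $(x_2, y)_k$ for a suitable $y$ with $t < y < B-1-t$ rather than on $y=0$; everything else in your argument (translation isomorphism of $t$-balls, injective extension of $\sigma_2$, disjointness of the two balls via $x_2 - x_1 > 2t$) goes through unchanged. For the specific machines the paper instantiates (unary $k$-counters and binary counters, whose heads return to cell $0$ cyclically) your original claim does hold, so your argument is correct for those; but the theorem is stated for arbitrary $M$ with $B \le T_M(B)$ and monotone $T_M$, and for that generality you need the off-axis version.
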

\begin{proof}
	Consider $i$-dimensional grid graphs where the number $n$ of nodes satisfies $n =  B^{i-1} \cdot T_M(B)$.
	Clearly, there are infinitely many $n$ with this property, due to the definition of $B$.
	More specifically, consider such a grid graph $\G$ satisfying $d_j = B$ for all $j \in \{2,\ldots,i\}$, and $d_1 = T_M(B)$.
	By the local constraints of $\Pi_M^i$, the only valid global output is to produce an $M$-encoding graph, on a surface expanding in dimensions $1$ and $k$ for some $k \in \{2,\ldots,i\}$. In fact:
	\begin{itemize}
		\item If nodes try to prove that the grid graph is unbalanced, since $T_M(B) \ge B$, the proof must either be locally wrong, or, if nodes outputting \lunbalanced{} actually form a diagonal chain, this chain must terminate on a node that, for any $2 \le j \le i$, does not have an incident edge labelled $(\dirnext,{j})$, that is, constraints defined in Section~\ref{subsec:unbalanced} are not satisfied, which also violates the local constraints of $\Pi_M^i$.
		\item If nodes try to produce an error pointer, since the specification of the validity of pointer outputs in the local constraints of $\Pi_M^i$ ensures that on grid graphs a pointer chain cannot visit any node twice, any error pointer chain must terminate somewhere. Since no nodes can be labelled \error{}, this is not valid.
		\item The only remaining possibility for the origin is to output a label from $\lbalabels \setminus \{ \lfinished \}$, which already implies that all the other nodes must produce outputs that turn the graph into an $M$-encoding graph.
	\end{itemize}
	Thus, it remains to show that producing a valid $M$-encoding labelling requires time $\Omega(T_M(B))$. Consider the node having coordinate $1$ equal to $x=T_M(B)$ and all other coordinates equal to $0$. This node must be labelled $\lstate({f})$, the nodes with coordinate $1$ strictly less than $x$ must not be labelled $\lstate({f})$, and the nodes with coordinate $1$ strictly greater than $x$ must be labelled \lfinished{}. Thus, a node needs to know if it is at distance $T_M(B)$ from the boundary of coordinate $1$, which requires $\Omega(T_M(B))$ time.
\end{proof}

\subsection{\boldmath Instantiating the \texorpdfstring{\lcl{}}{LCL} construction}\label{subsec:instantiate}
Our construction is quite general and allows to use a wide variety of \shortname{}s to obtain many different \lcl{} complexities. As a proof of concept, in Theorems \ref{thm:firstcompl} and \ref{thm:secondcompl}, we show some complexities that can be obtained using some specific \shortname{}s. Recall that
\begin{itemize}
	\item if we choose our \shortname{} $M$ to be a unary $k$-counter, for constant $k$, then $M$ has a running time of $T_M(B) = \Theta(B^k)$, and
	\item if we choose $M$ to be a binary counter, then $M$ has a running time of $T_M(B) = \Theta(2^B)$.
\end{itemize}

\begin{theorem}\label{thm:firstcompl}
	For any rational number $0 \le \alpha \le 1$, there exists an \lcl{} problem with time complexity $\Theta(n^{\alpha})$.
\end{theorem}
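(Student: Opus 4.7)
The plan is to take the $\Pi_M^i$ construction from Section~\ref{subsec:construction} and instantiate it with appropriate choices of $M$ and $i$, applying Theorems~\ref{thm:upper-bound} and~\ref{thm:lower-bound}, which together pin down the distributed complexity at $\Theta(T_M(B))$, where $B$ is the smallest positive integer with $n \le B^{i-1}\cdot T_M(B)$.

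First I would dispatch the two boundary cases. For $\alpha=0$, any problem solvable in $O(1)$ rounds works, e.g.\ the trivial \lcl{} where every node outputs a fixed symbol. For $\alpha=1$, any standard $\Theta(n)$-complexity \lcl{} suffices, e.g.\ $2$-colouring on paths. For the remaining range $0<\alpha<1$, write $\alpha=p/q$ with positive integers $p<q$, choose $M$ to be the unary $k$-counter with $k:=p$, and set $i:=q-p+1$. Then $k\ge 1$ and $i\ge 2$, so the construction is well-defined; moreover $T_M(B)=\Theta(B^k)$ is monotone in $B$ and satisfies $T_M(B)\ge B$ for $B\ge 1$, so the hypotheses from the start of Section~\ref{ssec:timecomplexity} are met.

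Substituting $T_M(B)=\Theta(B^k)$ into $n=\Theta(B^{i-1}\cdot T_M(B))$ gives $n=\Theta(B^{k+i-1})=\Theta(B^q)$, hence $B=\Theta(n^{1/q})$ and
\[
T_M(B)=\Theta(B^k)=\Theta(n^{k/q})=\Theta(n^{p/q})=\Theta(n^{\alpha}).
\]
By Theorems~\ref{thm:upper-bound} and~\ref{thm:lower-bound}, this is exactly the complexity of $\Pi_M^i$, which proves the claim.

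I do not anticipate a genuine obstacle: the heavy lifting was already carried out in establishing the matching upper and lower bounds on $\Pi_M^i$, and Section~\ref{ssec:lba} already observed that unary $k$-counters realise $T_M(B)=\Theta(B^k)$. The only point worth a brief check is surjectivity of $(k,i)\mapsto k/(k+i-1)$ from $\N_{\ge 1}\times\N_{\ge 2}$ onto $\mathbb{Q}\cap(0,1)$, which is witnessed directly by the explicit choice $k=p,\ i=q-p+1$ above.
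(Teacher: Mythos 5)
Your argument follows the same route as the paper's own proof: instantiate $\Pi_M^i$ with the unary $k$-counter, taking $k=p$ and $i=q-p+1$ when $\alpha=p/q$ (the paper writes $j>k$ where you write $q>p$), and then read off $B=\Theta(n^{1/q})$ and $T_M(B)=\Theta(n^{\alpha})$ from Theorems~\ref{thm:upper-bound} and~\ref{thm:lower-bound}. A useful extra in your version is the explicit treatment of the endpoints $\alpha\in\{0,1\}$: the paper's phrasing ``let $j>k$ be positive integers with $\alpha=k/j$'' has no solutions there, so strictly speaking its argument only covers $0<\alpha<1$, and your boundary cases are needed for the literal statement. One small caution for $\alpha=1$: ``$2$-colouring on paths'' is an $\lcl{}$ defined on paths and cycles, whereas Section~3 works over general bounded-degree graphs; you would want the standard adaptation that relaxes the constraint (e.g.\ permits a don't-care output) on any node whose neighbourhood is not path-like, which gives a bona fide bounded-degree $\lcl{}$ of complexity $\Theta(n)$.
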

\begin{proof}
	Let $j > k$ be positive integers satisfying $\alpha = k/j$. 
	Given an \shortname{} $M$ with running time $\Theta(B^k)$ and choosing $i = j - k + 1$, we obtain an \lcl{} problem $\Pi_M^i$ with complexity $\Theta(n / B^{j - k})$. We have that $n = \Theta(B^{j-k} \cdot T_M(B)) = \Theta(B^{j})$, which implies $B = \Theta(n^{1/j})$. Thus the time complexity of $\Pi_M^i$ is $\Theta( n / n^{(j-k)/j}) = \Theta( n^{\alpha})$.
\end{proof}

\begin{theorem}\label{thm:secondcompl}
	There exist \lcl{} problems of complexities $\Theta(n/\log^{j} n)$, for any positive integer~$j$.
\end{theorem}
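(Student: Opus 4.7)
The plan is to instantiate the general construction $\Pi_M^i$ with a binary-counter \shortname{} and the right number of dimensions, and then solve for $B$ asymptotically.

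Concretely, I would let $M$ be the \shortname{} implementing a binary counter, so that $T_M(B) = \Theta(2^B)$, and set $i = j+1$. Both assumptions required by Section~\ref{ssec:timecomplexity} are easy to check: $B \le 2^B$ for every positive $B$, and $T_M$ is monotone in $B$. By Theorems~\ref{thm:upper-bound} and \ref{thm:lower-bound}, the time complexity of $\Pi_M^i$ is $\Theta(T_M(B)) = \Theta(2^B)$, where $B$ is the smallest positive integer with $n \le B^{i-1} \cdot T_M(B) = B^{j} \cdot 2^B$.

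It then remains to solve the equation $n = \Theta(B^{j} \cdot 2^B)$ for $B$ as a function of $n$. Taking logarithms gives $\log_2 n = B + j \log_2 B + O(1)$, from which one reads off $B = \log_2 n - j \log_2 \log_2 n + O(1)$ in the usual way (plug in the first-order estimate $B = \log_2 n + O(\log\log n)$ back into $j\log_2 B$). This yields $B^{j} = \Theta(\log^{j} n)$ and therefore
\[
T_M(B) \;=\; \Theta(2^B) \;=\; \Theta\!\left(\frac{n}{B^{j}}\right) \;=\; \Theta\!\left(\frac{n}{\log^{j} n}\right),
\]
which is exactly the complexity claimed.

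There is really no hard step here: the only thing that requires a little care is the asymptotic inversion of $n = B^{j} 2^B$, but this is standard once the leading term $B \sim \log_2 n$ is identified. All the heavy lifting was already done in the matching upper and lower bounds for $\Pi_M^i$; the present theorem is just a matter of picking $M$ and $i$ so that the resulting expression $T_M(B)$, with $B$ determined by $n = B^{i-1} T_M(B)$, equals the target $n/\log^{j} n$.
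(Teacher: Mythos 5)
Your proposal is correct and follows exactly the paper's own route: take $M$ to be a binary counter with $T_M(B) = \Theta(2^B)$, set $i = j+1$, invoke Theorems~\ref{thm:upper-bound} and \ref{thm:lower-bound} to get complexity $\Theta(T_M(B))$ with $n = \Theta(B^j \cdot 2^B)$, and conclude $B = \Theta(\log n)$ and hence complexity $\Theta(n/\log^j n)$. You are slightly more explicit about the asymptotic inversion (the paper just asserts $B = \Theta(\log n)$), but the argument is the same.
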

\begin{proof}
	Given an \shortname{} $M$ with running time $\Theta(2^B)$ and choosing $i = j+1$, we obtain an \lcl{} problem $\Pi_M^i$ with complexity $\Theta(n / B^j)$. We have that $n = \Theta(B^j \cdot T_M(B)) = \Theta(B^j \cdot 2^B)$, which implies $B = \Theta(\log n)$. Thus the time complexity of $\Pi_M^i$ is $\Theta( n / \log^j n)$.
\end{proof}

\section{Complexity gap on trees}\label{sec:trees}

In the previous section, we have seen that there are infinite families of \lcl{}s with distinct time complexities between $\omega(\sqrt{n})$ and $o(n)$. In this section we prove that on trees there are no such \lcl{}s. That is, we show that if an \lcl{} is solvable in $o(n)$ rounds on trees, it can be also solved in $O(\sqrt{n})$ rounds.

The high level idea is the following. Consider an \lcl{} $\Pi$ that can be solved in $o(n)$ rounds on a tree $T$ of $n$ nodes, that is, there exists a distributed algorithm $\aorig$, that, running on each node of $T$, outputs a valid labelling for $\Pi$ in sublinear time. We show how to speed this algorithm up, and obtain a new algorithm $\aorig'$ that runs in $O(\sqrt{n})$ rounds and solves $\Pi$ as well.

To do this, we show that nodes of $T$ can distributedly construct, in $O(\sqrt{n})$ rounds, a virtual graph $S$ of size $N \gg n$. This graph $S$ will be defined such that we can run $\aorig$ on it, and use the solution that we get to obtain a solution for $\Pi$ on $T$. Moreover, for each node of $T$, it will be possible to simulate the execution of $\aorig$ on $S$ by just inspecting its neighbourhood of radius $O(\sqrt{n})$ on $T$, thus obtaining an algorithm for $\Pi$ running in $O(\sqrt{n})$ rounds.

We will define $S$ in multiple steps. Intuitively, $S$ is defined by first pruning branches of $T$ of small radius, and then \emph{pumping} (in the theory of formal languages sense) long paths to make them even longer. In more detail, in Section~\ref{sub:skeleton} we will define the concept of a \emph{skeleton tree}, where, starting from a tree $T$ we define a tree $T'$ where all subtrees of $T$ having a height that is less than some threshold are removed. Then, in Section~\ref{sub:virtual}, we will prune $T'$ even more, by removing all nodes of $T'$ having degree strictly greater than $2$, obtaining forest $T''$. This new forest will be a collection of paths. We will then split these paths in shorter paths and pump each of them. Intuitively, the pump procedure will replace the middle part of these paths by a repeated pattern that depends on the original content of the paths and on the parts previously removed when going from $T$ to $T'$. Tree $S$ will be obtained starting from the result of the pumping procedure, by bringing back the parts removed when going from $T$ to $T'$. Also, throughout the definition of $S$ we will keep track of a (partial) mapping between nodes of $T$ and nodes of $S$. 

Then, in  Section~\ref{sub:properties} we will prove useful properties of $S$. One crucial property, shown in Lemma \ref{lem:radiuslowerbound}, will be that, if two nodes in $T$ are far enough (that is, at $\omega(\sqrt{n})$ distance), then their corresponding nodes of $S$ will be at much larger distance. In Section~\ref{sub:faster} we will use this property to show that we can execute $\aorig$ on $S$ by inspecting only a neighbourhood of radius $O(\sqrt{n})$ of $T$. Notice that we will have conflicting requirements. On one hand, by pumping enough, the size of the graph increases to $N \gg n$, and $\aorig$ on $S$ will be allowed to run for $t = o(N)$ rounds, that is, much more than the time allowed on $T$. This seems to give us the opposite effect of what we want, that is, we actually increased the running time instead of reducing it. On the other hand, we will prove that seeing at distance $t$ on $S$ requires to see only at distance $O(\sqrt{n})$ on $T$, hence we will effectively be able to run $\aorig$ within the required time bound. We will use the output given by $\aorig$ on $S$ to obtain a partial solution for $T$, that is, only some nodes of $T$ will fix their output using the same output of their corresponding node in $S$. There will be some nodes that remain unlabelled: those nodes that correspond to the pumped regions of $S$. Finally, in Section~\ref{sub:filling}, we will show that it is possible to complete the unlabelled regions of $T$ efficiently in a valid manner, heavily using techniques already presented in \cite{Chang2019}.

\subsection{Skeleton tree}\label{sub:skeleton}
We first describe how, starting from a tree $T=(V,E)$, nodes can distributedly construct a virtual tree $T'$, called the \emph{skeleton} of $T$. Intuitively, $T'$ is obtained by removing all subtrees of $T$ having a height that is less than some threshold $\tnew$.

More formally, let $\tnew = \tc \sqrt{n}$, for some constant $\tc$ that will be fixed later.  Each node $v$ starts by gathering its $\tnew$-radius neighbourhood, $\ball{v}$. Also, let $\dg{v}$ be the degree of node $v$ in $T$. For all $v \in V$, we partition nodes of $\ball{v}$ (excluding $v$) in $\dg{v}$ components (one for each neighbour of $v$). Let us denote these components with $C_i(v)$, where $1\le i\le \dg{v}$. Each component $C_i(v)$ contains all nodes of $\ball{v}$ present in the subtree rooted at the $i$-th neighbour of $v$, excluding $v$.

Then, each node marks as $\del$ all the components that have low depth and broadcasts this information. Informally, nodes build the skeleton tree by removing all the components that are marked as $\del$ by at least one node. More precisely, each node $v$, for each $C_i(v)$, if $\dist{v}{w} < \tnew$ for all $w$ in $V(C_i(v))$, marks all edges in $E(C_i(v)) \cup \{\{v,u\}\}$ as $\del$, where $u$ is the $i$-th neighbour of $v$. Then, $v$ broadcasts $\ball{v}$ and the edges marked as $\del$ to all nodes at distance at most $\tnew + 2\tc$. Finally, when a node $v$ receives messages containing edges that have been marked with $\del$ by some node, then also $v$ internally marks as $\del$ those edges.

Now we have all the ingredients to formally describe how we construct the skeleton tree. The skeleton tree $T' = (V',E')$ is defined in the following way. Intuitively, we keep only edges that have not been marked $\del$, and nodes with at least one remaining edge (i.e., nodes that have at least one incident edge not marked with $\del$). In particular,
\begin{align*}
E' &= \bigl\{e\in E(T) \bigm| e \text{ is not marked with } \del\bigr\},\\
V' &= \bigl\{ u\in V \bigm| \text{there is a } w\in V \text{ s.t. } \{u,w\} \in E' \bigr\}.
\end{align*}
Also, we want to keep track of the mapping from a node of $T'$ to its original node in $T$; let $\phi \colon V(T') \to V(T)$ be such a mapping. Finally, we want to keep track of deleted subtrees, so let $\deltree_v$ be the subtree of $T$ rooted at $v \in V'$ containing all nodes of $C_j(v)$, for all $j$ such that $C_j(v) \text{ has been marked as } \del$. See Figure~\ref{fig:T_to_T'} for an example.

\begin{figure*}
	\centering
	\includegraphics[width=\figscale\textwidth]{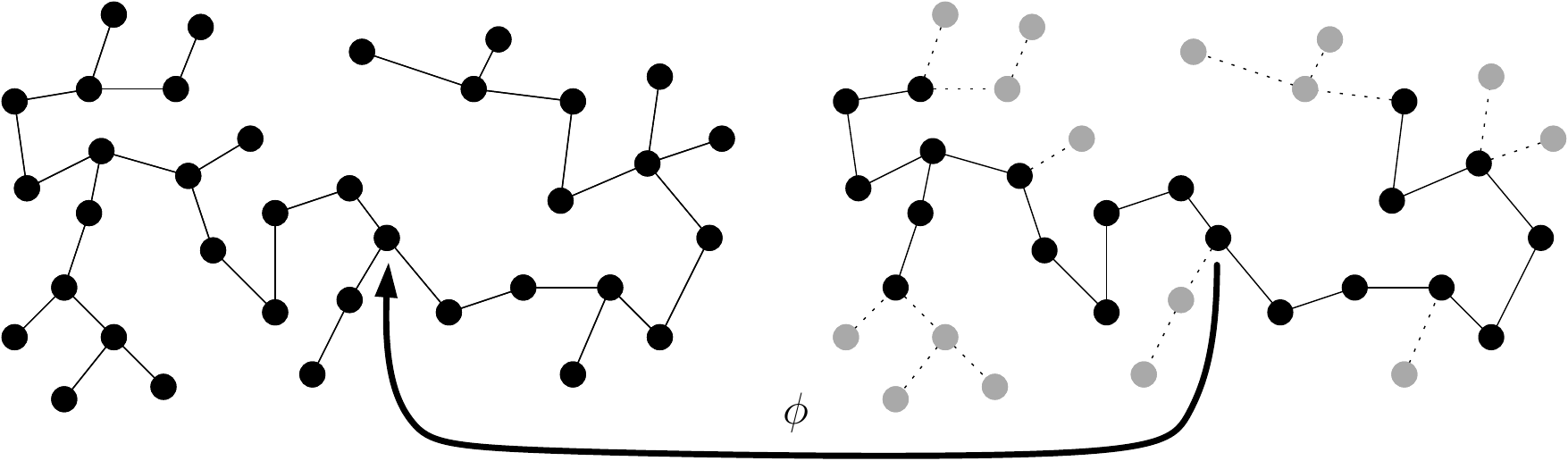}
	\caption{An example of a tree $T$ and its skeleton $T'$; nodes removed from T in order to obtain $T'$ are shown in grey. In this example, $\tnew$ is $3$.}
	\label{fig:T_to_T'}
\end{figure*}

\subsection{Virtual tree}\label{sub:virtual}
We now show how to distributedly construct a new virtual tree, starting from $T'$, that satisfies some useful properties. The high level idea is the following. The new tree is obtained by \emph{pumping} all paths contained in $T'$ having length above some threshold. More precisely, by considering only degree-$2$ nodes of $T'$ we obtain a set of paths. We split these paths in shorter paths of length $l$ ($\tc \le l \le 2 \tc$) by computing a  $(\tc +1,\tc)$ ruling set. Then, we pump these paths in order to obtain the final tree. Recall a $(\alpha,\beta)$ ruling set $R$ of a graph $G$ guarantees that nodes in $R$ have distance at least $\alpha$, while nodes outside $R$ have at least one node in $R$ at distance at most $\beta$.  It can be distributedly computed in $O(\log^* n)$ rounds using standard colouring algorithms \cite{Linial1992}.

More formally, we start by splitting the tree in many paths of short length. Let $T''$ be the forest obtained by removing from $T'$ each node $v$ having $\dgg{v}{T'} > 2$ (that is, the degree of $v$ in $T'$). $T''$ is a collection $\paths$ of disjoint paths. Let $\psi \colon V(T'') \to V(T')$ be the mapping from nodes of $T''$ to their corresponding node of $T'$. See Figure~\ref{fig:T'_to_T''} for an example.

\begin{figure*}
	\centering
	\includegraphics[width=\figscale\textwidth]{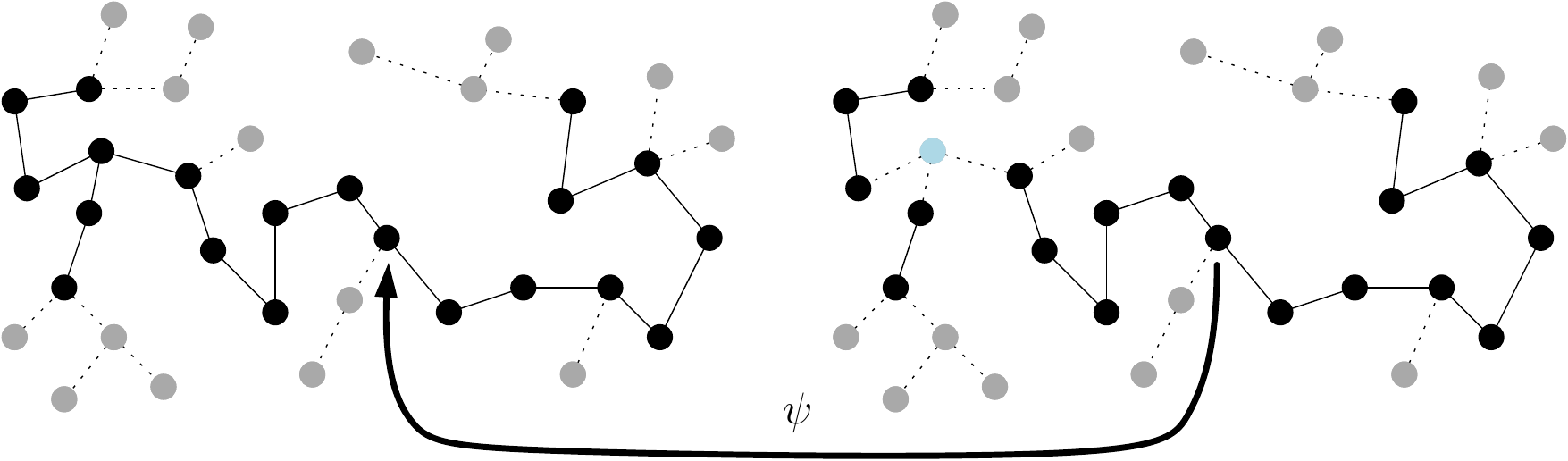}
	\caption{An example of the tree $T''$ obtained from $T'$; nodes with degree greater than $2$ (in blue) are removed from $T'$.}
	\label{fig:T'_to_T''}
\end{figure*}

We now want to split long paths of $\paths$ in shorter paths. In order to achieve this, nodes of the same path can efficiently find a $(\tc +1,\tc)$ ruling set in the path containing them. Nodes not in the ruling set form short paths of length $l$, such that $\tc \le l \le 2\tc$, except for some paths of $\paths$ that were already too short, or subpaths at the two ends of a longer path (this can happen when a ruling set node happens to be very near to the endpoint of a path of $\paths$).
Let $\spaths$ be the subset of the resulting paths having length $l$ satisfying $\tc \le l \le 2\tc$. See Figure~\ref{fig:T''_to_Q} for an example.

\begin{figure*}
	\centering
	\includegraphics[width=\figscale\textwidth]{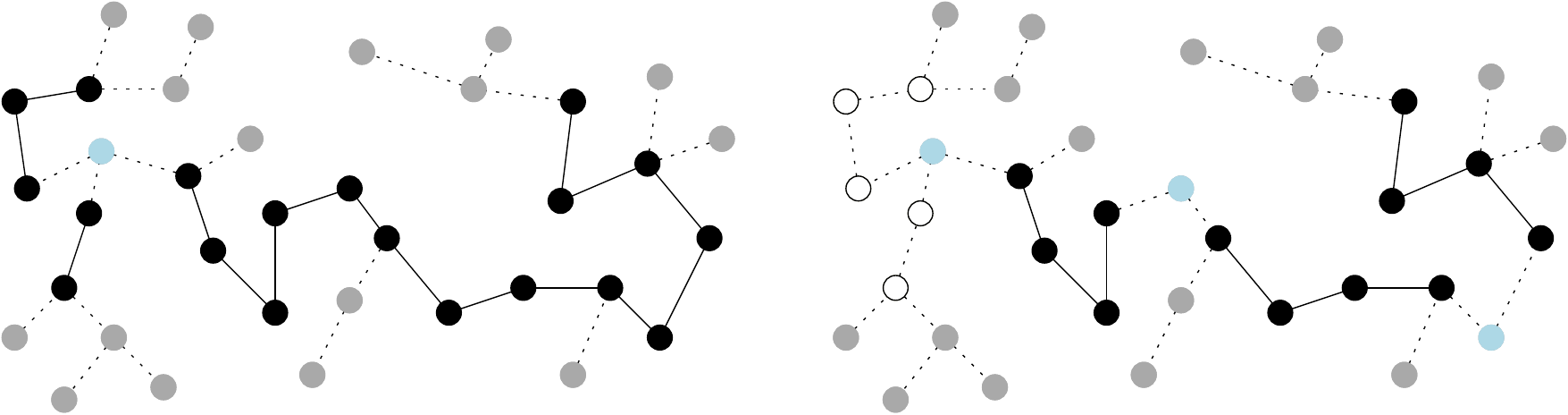}
	\caption{Blue nodes break the long paths $\paths$ of $T''$ shown on the left into short paths $\spaths$ shown in black on the right; short paths (in the example, paths with length less then 4) are ignored.}
	\label{fig:T''_to_Q}
\end{figure*}

In order to obtain the final tree, we will replace paths in $\spaths$ with longer version of them. We will first describe a function, $\replace$, that can be used to replace a subgraph with a different one. Informally, given a graph $G$ and a subgraph $H$ connected to the other nodes of $G$ via a set of nodes $F$ called poles, and given another graph $H'$, it replaces $H$ with $H'$. This function is a simplified version of the function $\replace$ presented in \cite[Section 3.3]{Chang2019}.
\begin{definition}[$\replace$]
	Let $H$ be a subgraph of $G$, and let $H'$ be an arbitrary graph. The poles of $H$ are those vertices in $V(H)$ adjacent to some vertex in $V(G)\setminus V(H)$. Let $F=(v_1,\ldots,v_p)$ be a list of the poles of $H$, and let $F'=(v'_1,\ldots,v'_p)$ be a list of nodes contained in $H'$ (called poles of $H'$). The graph $G' = \replace(G,(H,F),(H',F'))$ is defined in the following way. Beginning with $G$, replace $H$ with $H'$, and replace any edge $\{u,v_i\}$, where $u \in V(G)\setminus V(H)$, with $\{u,v'_i\}$.
\end{definition}
Informally, we will use the function $\replace$ to substitute each path $Q \in \spaths$ with a longer version of it, that satisfies some useful properties. In Section~\ref{sub:filling} we will have enough ingredients to be able to define a function, $\pump$, that is used to obtain these longer paths. This function will be defined in an analogous way of the function $\pump$ presented in \cite[Section 3.8]{Chang2019}. For now, we just define some properties that this function must satisfy.
\begin{definition}[Properties of $\pump$]
	Given a path $Q \in \spaths$ of length $l$ ($\tc \le l \le 2\tc$), consider the subgraph $Q^T$ of $T$, containing, for each $v \in V(Q)$, the tree $\deltree_{\chi(v)}$ (recall that $\deltree_v$ is the tree rooted at $v$ containing nodes removed when pruning $T$, defined at the end of Section~\ref{sub:skeleton}), where  $\chi(v) = \phi(\psi(v)))$, that is, the path $Q$ augmented with all the nodes deleted from the original tree that are connected to nodes of the path. Let $v_1,v_2$ be the endpoints of $Q$.
	
	The function $\pump(Q^T,\pumpmult)$ produces a new tree $P^T$ having two nodes, $v'_1$ and $v'_2$, satisfying that the path between $v'_1$ and $v'_2$ has length $l'$, such that $\tc \pumpmult \le l' \le \tc (\pumpmult+1)$. The new tree is obtained by replacing a subpath of $Q$, along with the deleted nodes connected to it, with many copies of the replaced part, concatenated one after the other.
	$\pump$ satisfies that nodes $v'_1,v'_2 \in G'$, where $G' = \replace(G,(Q^T,(v_1,v_2)),(P^T,(v'_1,v'_2)))$, have the same view as $v_1,v_2 \in G$ at distance  $2 \checkradius$ (where $\checkradius$ is the \lcl{} checkability radius). Note that, in the formal definition of $\pump$, we will set $c$ as a function of $r$.
\end{definition}
Let $\spaths^T$ be the set containing all $Q^T$. See Figure~\ref{fig:Q_to_Qt} for an example of $\spaths^T$.

\begin{figure*}
	\centering
	\includegraphics[width=\figscale\textwidth]{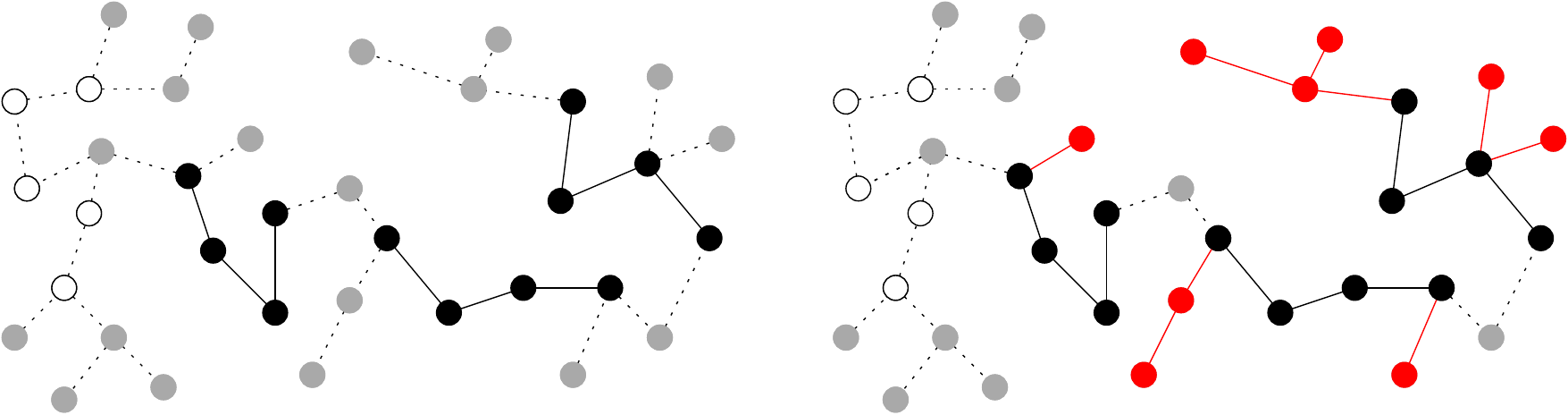}
	\caption{An example of $\spaths^T$, obtained by merging the path nodes (in black) with previously removed trees connected to them (in red).}
	\label{fig:Q_to_Qt}
\end{figure*}

The final tree $S$ is obtained from $T$ by replacing each path $Q \in \spaths$ in the following way. Replace each subgraph $Q^T$ with $P^T = \pump(Q^T,\pumpmult)$. Note that a node $v$ cannot see the whole set $\spaths$, but just all the paths $Q \in \spaths$ that end at distance at most $\tnew + 2 \tc$ from $v$. Thus each node locally computes just a part of $S$, that is enough for our purpose. We call the subgraph of $Q^T$ induced by the nodes of $Q$ the \emph{main path} of $Q^T$, and we define the main path of $P^T$ in an analogous way. See Figure~\ref{fig:Qt_to_S} for an example.

\begin{figure*}
	\centering
	\includegraphics[width=\figscale\textwidth]{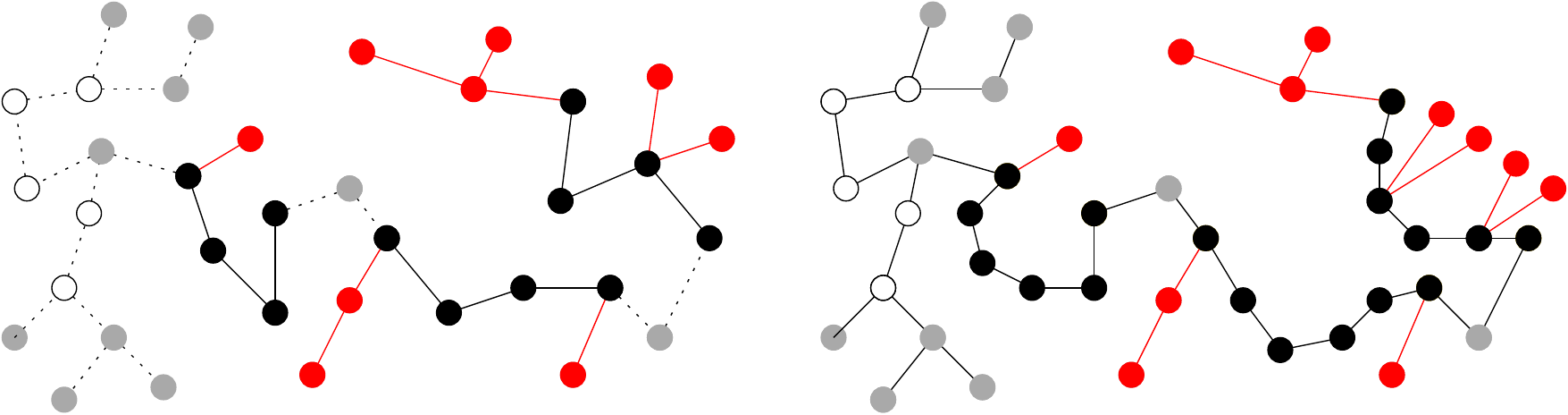}
	\caption{Tree $S$ (on the right) is obtained by pumping the black paths.}
	\label{fig:Qt_to_S}
\end{figure*}

Finally, we want to keep track of the \emph{real} nodes of $S$, that will be nodes that have not been removed when creating the skeleton tree $T'$ and are also not part of the pumped regions. Nodes of $S$ are divided in two parts, $S_o$ and $S_p$. The set $S_o$ contains all nodes of $T'$ that are not contained in any $Q^T$, and all nodes that are at distance at most $2 \checkradius$ from nodes not contained in any $Q^T$, while $S_p = V(S) \setminus S_o$. Let $\eta$ be a mapping from real nodes of the virtual graph ($S_o$) to their corresponding node of $T$ (this is well defined, by the properties of $\pump$), and let $T_o = \{ \eta(v) ~|~  v \in S_o\}$ (note that also $\eta^{-1}$ is well defined for nodes in $T_o$). Informally, $T_o$ is the subset of nodes of $T$ that are far enough from pumped regions of $S$, and have not been removed while creating $T'$. Note that we use the function $\eta$ to distinguish between nodes of $S$ and nodes of $T$, but $\eta$ is actually the identity function between a subset of shared nodes. This concludes the definition of $S$, as a function of the original tree $T$, and two parameters, $B$ and $c$. Let $\virt$ be the function that maps $T$ to $S$, that is, $S = \virt(T,B,\tc)$. See Figure~\ref{fig:S_to_T} for an example.

\begin{figure*}
	\centering
	\includegraphics[width=\figscale\textwidth]{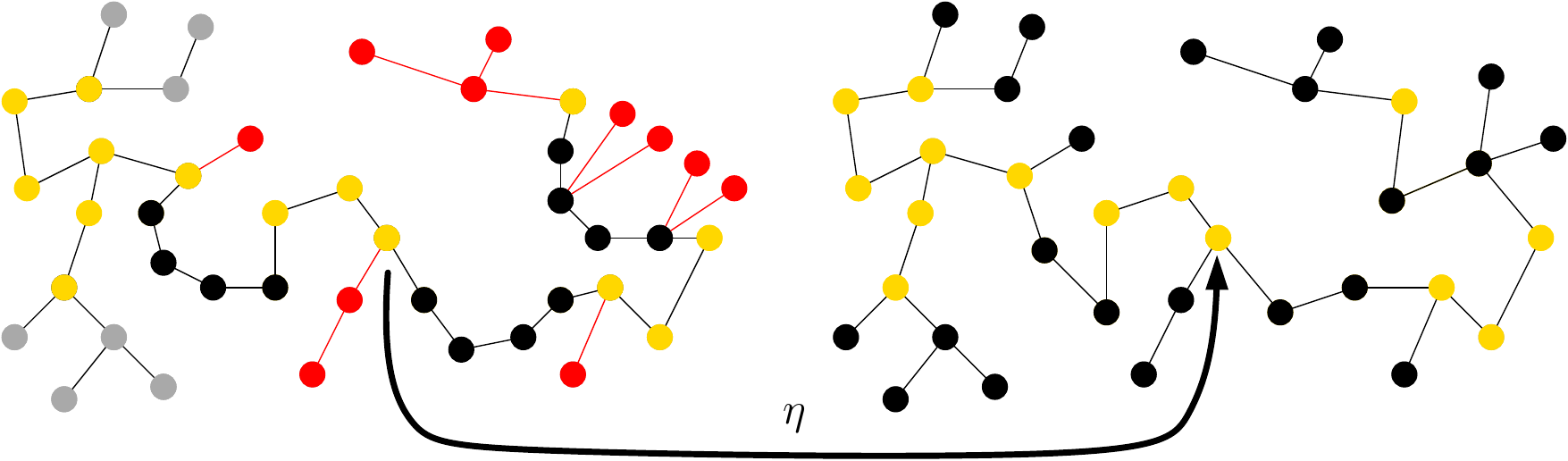}
	\caption{Nodes in yellow on the left are the ones in $S_o$, while the yellow ones on the right are nodes in $T_o$. Note that, for the sake of simplicity, we consider $2r=1$.}
	\label{fig:S_to_T}
\end{figure*}

\subsection{Properties of the virtual tree}\label{sub:properties}
We will now prove three properties about the virtual graph $S$. The first one provides an upper bound on the number of nodes of $S$, as a function of the number of nodes of $T$. This will be useful when executing $\aorig$ on $S$. In that case, we will lie to $\aorig$ about the size of the graph, by telling to the algorithm that there are $N=\tc (\pumpmult+1) n$ nodes. This lemma will guarantee us that the algorithm cannot see more than $N$ nodes and notice an inconsistency.

\begin{lemma}
	\label{lem:newsizeupperbound}
	The tree $S$ has at most $N=\tc (\pumpmult+1) n$ nodes, where $n = |V(T)|$, and $S = \virt(T,\pumpmult,\tc)$.
\end{lemma}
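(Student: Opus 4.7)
The plan is to account for how the node count changes when passing from $T$ to $S$. First I would observe that $S$ is obtained from $T$ by replacing each subgraph $Q^T$ (for $Q \in \spaths$) with $P^T = \pump(Q^T, \pumpmult)$, while every other node of $T$ is carried unchanged into $S$. Second, the subgraphs $\{Q^T\}_{Q \in \spaths}$ are pairwise vertex-disjoint: different $Q \in \spaths$ are vertex-disjoint subpaths of $T''$ (they are the components left after deleting a ruling set), and the deleted subtrees $\deltree_v$ for different $v \in V(T')$ are disjoint by construction, so the deleted nodes attached to one path $Q^T$ cannot lie on another. These two observations give
\[
|V(S)| = n - \sum_{Q \in \spaths} |V(Q^T)| + \sum_{Q \in \spaths} |V(P^T)|,
\]
and the disjointness also forces $\sum_{Q \in \spaths} |V(Q^T)| \le n$.

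The heart of the argument is to bound $|V(P^T)|$ in terms of $|V(Q^T)|$. By the stated properties of $\pump$, the operation takes a subpath of $Q$ of some length $k \ge 1$, together with the deleted subtrees attached to it (call the total size of this pattern $s$, and note $s \le |V(Q^T)|$), and replaces it by $m$ concatenated copies, producing a tree $P^T$ whose main path has length $l' = l + (m-1)k$. Hence $|V(P^T)| \le |V(Q^T)| + (m-1)s$. Using the inequalities $l' \le \tc(\pumpmult+1)$ and $k \ge 1$, I get $m - 1 \le \tc \pumpmult$, and therefore $(m-1)s \le \tc \pumpmult \cdot |V(Q^T)|$.

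Plugging these bounds into the identity above and using $\sum_{Q} |V(Q^T)| \le n$ yields
\[
|V(S)| \;\le\; n + \tc \pumpmult \sum_{Q \in \spaths} |V(Q^T)| \;\le\; (\tc \pumpmult + 1)\, n \;\le\; \tc(\pumpmult + 1)\, n,
\]
where the final step uses $\tc \ge 1$ (which is satisfied in the construction). The only delicate points are the disjointness of the $Q^T$ and the exact accounting of how many nodes the pump introduces per copy of its pattern; both follow directly from the definitions given in Sections~\ref{sub:skeleton} and~\ref{sub:virtual} and do not require any deeper argument, so I expect no real obstacle beyond bookkeeping.
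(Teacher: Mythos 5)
Your proof is correct and follows essentially the same idea as the paper's, which observes that each node of $T$ is copied at most $\tc(\pumpmult+1)$ times (since the pumped main path has length at most $\tc(\pumpmult+1)$ while the pumped pattern has path-length at least $1$, and each deleted subtree $\deltree_v$ hangs off a unique path node). Your version just does the bookkeeping more explicitly, splitting the count into the identity $|V(S)| = n - \sum_Q |V(Q^T)| + \sum_Q |V(P^T)|$, bounding the copy multiplicity $m-1 \le \tc\pumpmult$ from $l'\le \tc(\pumpmult+1)$, $l\ge\tc$, $k\ge 1$, and then invoking disjointness of the $Q^T$'s; this yields the slightly sharper $(1+\tc\pumpmult)n$ before relaxing to $\tc(\pumpmult+1)n$ via $\tc\ge 1$. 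No gaps.
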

\begin{proof}
	$S$ is obtained by pumping $T$.
	The main path of the subtree obtained by pumping some $Q^T \in \spaths^T$ has length at most $\tc (\pumpmult+1)$. This implies that each node of the main path of $Q^T$ is copied at most $\tc (\pumpmult+1)$ times. Also, a deleted tree $\deltree_v$ rooted at some path node $v$ is not connected to more than one path node. Thus, \emph{all} nodes of $T$ are copied at most $\tc (\pumpmult+1)$ times.
\end{proof}

The following lemma bounds the size of $T'$ compared to the size of $T''$. This bound will be useful in the proof of Lemma \ref{lem:radiuslowerbound}. Notice that, this is the exact point in which our approach stops working for time complexities of $O(\sqrt{n})$ rounds. This is exactly what we expect, since we know that there are \lcl{} problems on trees having complexity $\Theta(\sqrt{n})$~\cite{Chang2019}.
\begin{lemma}
	\label{lem:deg2nodes}
	For any path $P = (x_1,\ldots,x_k)$ of length $k \ge \tc \sqrt{n}$ that is a subgraph of $T'$, at most $\frac{\sqrt{n}}{\tc}$ nodes in $V(P)$ have degree greater than $2$. 
\end{lemma}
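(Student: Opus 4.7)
The plan is to show that every node of $V(P)$ of $T'$-degree greater than $2$ forces a disjoint branch of at least $\tnew = \tc\sqrt{n}$ nodes of $T$ to lie off of $P$, and then to obtain the claim by counting inside $T$. Let $m$ denote the number of nodes of $V(P)$ with $\dgg{\cdot}{T'} > 2$, enumerated as $v_1,\dots,v_m$.

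First, I would produce an off-path $T'$-neighbour for each $v_j$. Since $v_j$ has at most two $T'$-neighbours on the path $P$ (its predecessor and successor), the assumption $\dgg{v_j}{T'} \geq 3$ gives some $T'$-neighbour $u_j$ of $v_j$ with $u_j \notin V(P)$. The edge $\{v_j,u_j\}$ survives into $T'$, so in particular it was not marked $\del$ by $v_j$ itself. By the marking rule of Section~\ref{sub:skeleton}, this means there is some node $w_j$ in the component $C(v_j) \subseteq \ball{v_j}$ rooted at $u_j$ with $\dist{v_j}{w_j} \geq \tnew$, and since $C(v_j) \subseteq \ball{v_j}$ one in fact has $\dist{v_j}{w_j} = \tnew$.

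Next, let $B_j$ be the connected component of $T\setminus\{v_j\}$ containing $u_j$. Because $T$ is a tree, $B_j$ contains the entire $v_j$-to-$w_j$ path except the endpoint $v_j$, so $|V(B_j)| \geq \tnew$. Again because $T$ is a tree and the branching points $v_j$ are distinct nodes of $V(P)$, the subtrees $B_1,\dots,B_m$ are pairwise disjoint and also disjoint from $V(P)$. The final step is a one-line counting argument: since $V(P)$ together with the $V(B_j)$'s sit disjointly inside $V(T)$,
\[
n \;\geq\; |V(P)| + \sum_{j=1}^m |V(B_j)| \;\geq\; k + m\tnew \;\geq\; m\tnew,
\]
so $m \leq n/\tnew = \sqrt{n}/\tc$, as required. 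The only slightly delicate point is the translation from ``$\{v_j,u_j\}$ survives in $T'$'' to ``the branch at $u_j$ has depth $\tnew$ in $T$'', which is just a direct reading of the marking rule of Section~\ref{sub:skeleton}; I do not foresee any real obstacle here.
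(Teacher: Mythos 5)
Your proof is correct and follows essentially the same route as the paper's: each high-$T'$-degree node on $P$ forces a branch of at least $\tnew$ nodes of $T$ off the path, and a counting argument inside $T$ bounds how many such branches can fit. You are somewhat more careful than the paper in spelling out that the branches $B_j$ are pairwise disjoint and disjoint from $V(P)$ (the paper leaves this implicit), but the underlying argument is the same.
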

\begin{proof}
	If a node $x_j \in P$ has $\dgg{v}{T'} > 2$, it means that it has at least one neighbour $z \not\in \{x_{j-1},x_{j+1}\}$ in $T'$ such that there exists a node $w$ satisfying $\dist{x_j}{w} \ge \tnew$ such that the shortest path connecting $x_j$ and $w$ contains $z$. Thus, for each node in $P$ with $\dgg{v}{T'} > 2$, we have at least other $\tnew$ nodes not in $P$. If at least $\frac{\sqrt{n}}{\tc}+1$ nodes of $P$ have degree greater than~$2$, we would obtain a total of  $(\frac{\sqrt{n}}{\tc}+1)\cdot \tnew > n$ nodes, a contradiction. 
\end{proof}

The following lemma compares distances in $T$ with distances in $S$, and states that if two nodes are far enough, that is, at $\omega(\sqrt{n})$ distance in $T$, then we can increase the distance of their corresponding nodes in $S$ by an arbitrary amount. This is what will allow us to speedup $\aorig$ to $O(\sqrt{n})$. 
\begin{lemma}
	\label{lem:radiuslowerbound}
	There exists some constant $\tc$ such that, if nodes $u$, $v$ of $T_o$ are at distance at least $\tc \sqrt{n}$ in $T$, then their corresponding nodes $\eta^{-1}(u)$ and $\eta^{-1}(v)$ are at distance at least $c \pumpmult \sqrt{n} /3$ in $S$.
\end{lemma}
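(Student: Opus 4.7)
The approach is to trace the unique $u$-to-$v$ path $P_{uv}$ in $T$ and show that a $\Omega(\sqrt{n})$ number of the short subpaths $Q \in \spaths$ sit entirely on $P_{uv}$; since each such $Q$ is replaced in $S$ by a pumped version of main-path length at least $cB$, the distance in $S$ grows by the required factor.

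First I would argue that $P_{uv}$ essentially lies in $T'$. The subtrees removed in the construction of $T'$ hang off $T'$ as dead ends, so the unique path in $T$ between any two $T'$-nodes stays inside $T'$; and since $u,v \in T_o$ correspond via $\eta$ to $S_o$-nodes, each of $u,v$ is within $O(r)$ of $V(T')$. Hence the portion of $P_{uv}$ inside $T'$ has length at least $c\sqrt{n} - O(r)$. Applying Lemma~\ref{lem:deg2nodes} to this portion, at most $\sqrt{n}/c$ of its nodes have $T'$-degree exceeding $2$. Removing these branch nodes splits $P_{uv}\cap V(T')$ into at most $\sqrt{n}/c + 1$ maximal degree-$2$ segments (each a subpath of some $P\in\paths$), of total length at least $c\sqrt{n} - \sqrt{n}/c - O(r)$.

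Within a degree-$2$ segment of length $\ell$, the $(c+1,c)$-ruling set on the enclosing path in $\paths$ places chosen nodes at consecutive distance between $c+1$ and $2c+1$, so the segment contains at least $\ell/(2c+1) - O(1)$ subpaths from $\spaths$ lying \emph{entirely} inside it (the $O(1)$ absorbs the at most two ruling-set pairs whose subpath could spill past a branch node at the segment's ends, as well as subpaths too short to belong to $\spaths$). Summing over all segments gives at least
\[
\frac{c\sqrt{n} - \sqrt{n}/c - O(r)}{2c+1} \;-\; O\!\Bigl(\frac{\sqrt{n}}{c}\Bigr) \;\geq\; \frac{\sqrt{n}}{3}
\]
fully-contained subpaths, provided $c$ is chosen large enough relative to the checkability radius $r$.

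Finally I would invoke pumping: each such $Q$ is replaced in $S$ by $\pump(Q^T,B)$, whose main path has length at least $cB$ and is glued back via the pair of poles corresponding to the endpoints of $Q$. Since $S$ is a tree, the unique path from $\eta^{-1}(u)$ to $\eta^{-1}(v)$ must traverse every fully-contained pumped region from one pole to the other, contributing at least $cB$ per region; as the regions are disjoint, their contributions sum to at least $(\sqrt{n}/3)\cdot cB = cB\sqrt{n}/3$, as required. The main obstacle is the boundary accounting: $u$ and $v$ may lie within $2r$ of a pumped region and the degree-$2$ segments can lose a constant number of subpaths at each branch node, but both effects are absorbed into the $O(r)$ and $O(\sqrt{n}/c)$ slack terms above by choosing $c$ sufficiently large, which is the sole reason for the dependence of $c$ on $r$.
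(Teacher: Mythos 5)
Your proposal is correct and follows essentially the same strategy as the paper's proof: restrict to the $u$--$v$ path, apply Lemma~\ref{lem:deg2nodes} to bound the branch nodes, account for ruling-set and boundary waste, and multiply the pumped portion by the expansion factor. The only cosmetic difference is that you count fully-contained subpaths of $\spaths$ directly (each contributing $\geq \tc\pumpmult$), whereas the paper subtracts the non-pumped nodes and multiplies the remainder by the length ratio $\tc\pumpmult/(2\tc)$; these are equivalent accounts.
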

\begin{proof}
	Consider a node $u$ at distance at least $\tnew$ from $v$ in $T$. There must exist a path $P$ in $T'$ connecting $\phi^{-1}(u)$ and $\phi^{-1}(v)$. By Lemma \ref{lem:deg2nodes}, at most $\frac{\sqrt{n}}{\tc}$ nodes in $P$ have degree greater than $2$, call the set of these nodes $X$.  We can bound the number of nodes of $P$ that are not part of paths that will be pumped in the following way:
	\begin{itemize}
		\item At most $\frac{c \sqrt{n} +1}{c+1} + \frac{\sqrt{n}}{c}+1$ nodes can be part of the ruling set. To see this, order the nodes of $P$ from left to right in one of the two canonical ways. The first summand bounds all the ruling set nodes whose right-hand short path is of length at least $\tc$, the second one bounds the ruling set nodes whose right-hand short path ends in a node $x \in X$, and the last one considers the path that ends in $\phi^{-1}(u)$ or $\phi^{-1}(v)$.
		\item At most $\frac{\sqrt{n}}{\tc} (1+2(\tc-1))$ nodes are either in $X$ or in short paths of length at most $\tc-1$ on the sides of a node in $X$.
		\item At most $2(c-1)$ nodes are between $\phi^{-1}(u)$ (or $\phi^{-1}(v)$) and a ruling set node.
	\end{itemize}
	While pumping the graph, in the worst case we replace paths of length $2 \tc $ with paths of length $\tc \pumpmult $, thus $\dist{\phi^{-1}(u)}{\phi^{-1}(v)}$ is at least
	\[
	\biggl(\tc \sqrt{n} +1 - \Bigl(\frac{c \sqrt{n} +1}{c+1} + \frac{\sqrt{n}}{c} + 1 + \frac{\sqrt{n}}{\tc} \bigl(1+2(\tc-1)\bigr) + 2(c-1)	\Bigr) \biggr) \cdot \frac{\tc \pumpmult }{2\tc} -1,
	\]
	which is greater than $c \pumpmult \sqrt n /3$ for $\tc$ and $n$ greater than a large enough constant.
\end{proof}

\subsection{Solving the problem faster}\label{sub:faster}
We now show how to speed up the algorithm $\aorig$ and obtain an algorithm running in $O(\sqrt{n})$. First, note that if the diameter of the original graph is $O(\sqrt{n})$, every node sees the whole graph in $O(\sqrt{n})$ rounds, and the problem is trivially solvable by brute force. Thus, in the following we assume that the diameter of the graph is $\omega(\sqrt{n})$. This also guarantees that $T_o$ is not empty.

Informally, nodes can distributedly construct the virtual tree $S$ in $O(\sqrt{n})$ rounds, and safely execute the original algorithm on it. Intuitively, even if a node $v$ sees just a part of $S$, we need to guarantee that this part has large enough radius, such that the original algorithm cannot see outside the subgraph of $S$ constructed by $v$ (otherwise $v$ would not be able to simulate the execution of $\aorig$ on $S$).

More precisely, all nodes do the following. First, they distributedly construct $S$, in $O(\sqrt{n})$ rounds. This increases the number of nodes, and requires nodes to assign new unique IDs to nodes that do not exist in the original graph (that is, nodes in the pumped regions). This new ID assignment can be computed in a standard manner as a function of the two IDs of the endpoints of the pumped paths. Then, each node $v$ in $T_o$ (nodes for which $\eta^{-1}(v)$ is defined), simulates the execution of $\aorig$ on node $\eta^{-1}(v)$ of $S$, by telling $\aorig$ that there are $N=\tc (\pumpmult+1) n$ nodes. Then, each node $v$ in $T_o$ outputs the same output assigned by $\aorig$ to node $\eta^{-1}(v)$ in $S$. Also, each node $v$ in $T_o$ fixes the output for all nodes in $\deltree_v$ ($\eta$ can be defined also for them, $v$ sees all of them, and the view of these nodes is contained in the view of $v$, thus it can simulate $\aorig$ in $S$ for all of them). Let $\Lambda$ be the set of nodes that already fixed an output, that is, $\Lambda =  \{ \{u\} \cup V(\deltree_u) ~|~ u \in T_o  \}$. Intuitively $\Lambda$ contains all the real nodes of $S$ (nodes with a corresponding node in $T$), including nodes removed when computing the skeleton tree, and leaves out only nodes that correspond to pumped regions. Finally, nodes in $V(T) \setminus \Lambda$ find a valid output via brute force. 

We need to prove two properties, the first shows that a node can safely execute $\aorig$ on the subgraph of $S$ that it knows, while the second shows that it is always possible to find a valid output for nodes in $V(T) \setminus \Lambda$ after having fixed outputs for nodes in $\Lambda$.

Let us choose a $\pumpmult$ satisfying $\told(N) \le c \pumpmult \sqrt{n} /3 $, where $\told(N)$ is the running time of $\aorig$. Note that $\pumpmult$ can be an arbitrarily large function of $n$. Such a $\pumpmult$ exists for all $\told(x) = o(x)$. We prove the following lemma.
\begin{lemma}
	For nodes in $T_o$, it is possible to execute $\aorig$ on $S$ by just knowing the neighbourhood of radius $2 \tc \sqrt{n}$ in $T$.
\end{lemma}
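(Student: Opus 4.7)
The plan is to exploit the fact that $\aorig$, when told the graph has $N = \tc(\pumpmult+1)n$ nodes, never looks further than $\told(N)$ hops in $S$, and by our choice of $\pumpmult$ we have $\told(N) \le \tc\pumpmult\sqrt{n}/3$. Hence, in order to simulate $\aorig$ at $\eta^{-1}(v)$, node $v$ only needs to determine the subgraph of $S$ induced by the $\told(N)$-ball around $\eta^{-1}(v)$, together with the IDs appearing there. I would split the analysis into a ``real'' part and a ``virtual'' part of this ball.

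For the real part, I would argue that every node $\eta^{-1}(u) \in S_o$ in the ball satisfies $u \in T_o$ with $\dist{u}{v} \le \tc\sqrt{n}$ in $T$. This is the contrapositive of Lemma~\ref{lem:radiuslowerbound}: if $u$ and $v$ were at $T$-distance greater than $\tc\sqrt{n}$, then $\eta^{-1}(u)$ and $\eta^{-1}(v)$ would be at $S$-distance more than $\tc\pumpmult\sqrt{n}/3 \ge \told(N)$, contradicting that $\eta^{-1}(u)$ lies in the $\told(N)$-ball. So every real node relevant to the simulation corresponds via $\eta$ to a node inside the $\tc\sqrt{n}$-ball of $v$ in $T$, which $v$ trivially sees.

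For the virtual part, the nodes in $S_p$ (introduced by pumping) and the nodes of the deleted subtrees $\deltree$ hanging off the skeleton are deterministic local functions of $T$: each $P^T = \pump(Q^T, \pumpmult)$ is fully determined by the short path $Q^T \in \spaths^T$ together with the parameters $\pumpmult$ and $\tc$, and each $\deltree_{\chi(w)}$ lives entirely within distance $\tnew$ of its root. The factor of two in the claimed radius $2\tc\sqrt{n}$ is precisely the buffer needed to assemble all of this: gathering the skeleton $T'$ takes $\tnew + 2\tc = O(\sqrt{n})$ rounds, computing the $(\tc+1,\tc)$-ruling set that partitions $T''$ into short paths adds only $O(\log^* n)$, and identifying every deleted subtree and every path $Q^T$ whose endpoints lie within $\tc\sqrt{n}$ of $v$ requires at most another $\tnew$ rounds. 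Finally, the fresh IDs assigned to pumped nodes are a deterministic function of the two endpoint IDs of the corresponding short path, so $v$'s local reconstruction agrees with what any other node would compute for the same pumped region, ensuring global consistency of the simulated view. The main subtlety is simply to verify that all of the auxiliary radii listed above --- skeleton radius, ruling-set radius, the checkability radius $\checkradius$ used in the definition of $\pump$, and the $O(1)$ padding around the endpoints of each $Q^T$ --- jointly fit inside the $2\tc\sqrt{n}$ budget, which follows by tracking the constants.
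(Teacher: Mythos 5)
Your proof follows essentially the same approach as the paper: bound $\told(N)$ by the choice of $\pumpmult$, invoke Lemma~\ref{lem:radiuslowerbound} to show that nodes outside a $2\tc\sqrt{n}$-ball in $T$ correspond to nodes farther than $\told(N)$ in $S$, and argue that $v$ can reconstruct the relevant portion of $S$ (real and pumped parts, plus consistent IDs) from its $T$-neighbourhood. Your explicit split into a ``real'' and ``virtual'' part, and the remark that the pumped structures and fresh IDs are deterministic local functions of the visible data, are a useful elaboration of what the paper handles tersely; the paper instead folds the $\deltree_u$ depth bound of $\tc\sqrt{n}$ directly into the radius calculation.

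One point the paper makes explicitly that you leave tacit: you open by asserting that $\aorig$, when told the graph has $N$ nodes, never looks further than $\told(N)$ hops, but for that premise to be safe one must also check that the simulated view does not itself expose the lie about~$N$. The paper's proof first invokes Lemma~\ref{lem:newsizeupperbound} (that $|V(S)| \le N$) so that the ball $v$ hands to $\aorig$ can never contain more than $N$ nodes, and then notes that the same radius argument implies nodes in $T_o$ never see the entire graph $S$; together these rule out the two ways the algorithm could detect that $N$ is not the true size. Your argument implicitly depends on these facts, so you should state them.
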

\begin{proof}
	First, note that by Lemma \ref{lem:newsizeupperbound}, the number of nodes of the virtual graph, $|V(S)|$, is always at most $N$, thus, it is not possible that a node of $S$ sees a number of nodes that is more than the number claimed when simulating the algorithm.
	
	Second, since $\pumpmult$ satisfies $\told(N) \le c \pumpmult \sqrt{n} /3$, and since, by Lemma  \ref{lem:radiuslowerbound} and the bound of $\tc \sqrt{n}$ on the depth of each deleted tree $\deltree_u$, the nodes outside a $2\tc \sqrt{n}$ ball of nodes in $T_o$ are at distance at least $\tc \pumpmult \sqrt{n} /3$ in $S$, the running time of $\aorig$ is less than the radius of the subtree of $S$ rooted at a node $v$ that $v$ distributedly computed and is aware of. This second part also implies that nodes in $T_o$ do not see the whole graph, thus they cannot notice that the value of $N$ is not the real size of the graph. 
\end{proof}

\subsection{Filling gaps by brute force}\label{sub:filling}
In this last part, we show that, by starting from a tree $T$ in which nodes of $\Lambda$ have already fixed an output, we can find a valid output for all the other nodes of the graph, in constant time. For this purpose, we adapt some definitions presented in \cite{Chang2019}, where it is shown that, by starting from a partially labelled graph, if we replace a subgraph with a different subgraph of the same \emph{type}, then the labelling of the original graph can be completed if and only if the labelling of the new graph can be completed. In our case the subgraphs that we replace are not labelled, and the following definitions handle exactly this case. In the following, unless stated otherwise, we use the term labelling to refer to an \emph{output} labelling.

We start by defining an equivalence relation $\eqrg$ between two pairs $(H,F)$ and $(H',F')$ composed of a graph and its poles. Intuitively, this equivalence relation says that equivalent $H$ and $H'$ should be isomorphic near the poles, and that if we fix some output near the poles of one graph, if we copy that output on the other graph (on the isomorphic part), and if that output is completable on the remaining nodes of the first graph, then it should be completable also on the other graph. A partial labelling (a partial function from nodes to labels) is called \emph{extendible} if it is possible to assign a label to unlabelled nodes such that it is locally consistent for every node, that is, the labelling satisfies the constraints of the given \lcl{} problem at every node.
This is a simplified version of the equivalence relation $\eqrg$ presented in \cite[Section 3.5]{Chang2019}.
\begin{definition}[The equivalence relation $\eqrg$]\label{def:eqrg}
	Given a graph $H$ and its poles $F$, define $\xi(H,F) = (D_1,D_2,D_3)$ to be a tripartition of $V(H)$ where 
	\begin{align*}
	D_1 &= \bigcup_{v\in F} N^{\checkradius-1}(v), \\
	D_2 &=  \bigcup_{v\in D_1} N^{\checkradius}(v) \setminus D_1, \\
	D_3 &= V(H) - (D_1 \cup D_2).
	\end{align*}
	Let $Q$ and $Q'$ be the subgraphs of $H$ and $H'$ induced by the vertices in $D_1 \cup D_2$ and $D'_1 \cup D'_2$ respectively.
	
	The equivalence holds, i.e., $(H,F) \eqrg (H',F')$, if and only if there is a $1$ to $1$ correspondence $\phi \colon (D_1 \cup D_2) \to (D'_1 \cup D'_2) $ satisfying:
	\begin{itemize}
		\item $Q$ and $Q'$ are isomorphic under $\phi$, preserving the input labels of the \lcl{} problem (if any), and preserving the order of the poles.
		\item Let $\labeling_*$ be any assignment of output labels to vertices in $D_1 \cup D_2$, and let $\labeling'_*$ be the corresponding labelling of $D'_1 \cup D'_2$ under $\phi$. Then $\labeling_*$ is extendible to $V(H)$ if and only if $\labeling'_*$ is extendible to $V(H')$. 
	\end{itemize} 
\end{definition}
In \cite{Chang2019}, it is proved that this equivalence relation is preserved after replacing equivalent subgraphs, and that, if the number of poles is constant, there is a constant number of equivalence classes.

Also, in \cite[Section 3.6]{Chang2019} the following lemma is proved. Informally, it shows that if we have a valid labelling for a graph, and we replace a subgraph with another equivalent to it, it is enough to change the labelling of the new subgraph  in order to obtain a valid labelling for the whole new graph. Also, the labelling near the borders is preserved. A labelling is locally consistent for node $v$ if the \lcl{} verifier running on node $v$ accepts that labelling.
\begin{lemma}
	\label{lem:thelemma}
	Let $G' = \replace(G,(H,F),(H',F'))$. Suppose $(H,F) \eqrg (H',F')$. Let $D_0 = V(G) \setminus V(H)$. Let $\labeling_{\diamond}$ be a complete labelling of $G$ that is locally consistent for all vertices in $D_2 \cup D_3$. Then there exists a complete labelling $\labeling'_{\diamond}$ satisfying the following:
	\begin{itemize}
		\item $\labeling_{\diamond}  = \labeling'_{\diamond}$ for all $v \in D_0 \cup D_1 \cup D_2$ and their corresponding vertices in $D'_0 \cup D'_1 \cup D'_2$. Also, if $\labeling_{\diamond}$ is locally consistent for a node $v$, then $\labeling'_{\diamond}$ is locally consistent for $\phi(v)$.
		\item $\labeling'_{\diamond}$ is locally consistent for all nodes in $D'_2 \cup D'_3$.
	\end{itemize}
	
\end{lemma}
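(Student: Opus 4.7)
The plan is to construct $\labeling'_\diamond$ in three disjoint pieces---on $D_0$, on $D'_1 \cup D'_2$, and on $D'_3$---using the equivalence $(H,F)\eqrg(H',F')$ to supply the last piece. On $D_0 = D'_0$ I set $\labeling'_\diamond := \labeling_\diamond$, which is legitimate because $\replace$ only rewires the edges incident to $V(H)$ and leaves the subgraph induced by $D_0$ untouched. On $D'_1 \cup D'_2$ I transport labels along the isomorphism $\phi$ from Definition~\ref{def:eqrg}, setting $\labeling'_\diamond(\phi(v)) := \labeling_\diamond(v)$ for every $v \in D_1 \cup D_2$; this immediately yields the agreement clause of the lemma. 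To fill in $D'_3$, I apply the equivalence to $\labeling_* := \labeling_\diamond|_{D_1 \cup D_2}$: the labelling $\labeling_\diamond|_{V(H)}$ witnesses that $\labeling_*$ is extendible to $V(H)$, since our hypothesis supplies consistency on $D_2 \cup D_3$. Hence the transported labelling $\labeling'_*$ is extendible to $V(H')$, and I take any such extension as the definition of $\labeling'_\diamond$ on $D'_3$.

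To finish, I need to verify the two stated conclusions. For local consistency on $D'_2 \cup D'_3$, observe that for any $v$ in this set the $r$-ball $N^r_{G'}(v)$ stays inside $V(H')$: escaping to $D_0$ requires passing through a pole $p \in F' \subseteq D'_1$, which means first leaving $D'_2 \cup D'_3$; consequently the consistency of the chosen extension inside $H'$ transfers to consistency in $G'$. For the conditional clause, if $\labeling_\diamond$ is consistent at $v$, I show that the $r$-radius view of $v$ in $G$ is isomorphic, as a labelled graph, to that of $\phi(v)$ in $G'$: inside $V(H)$ the map $\phi$ is such an isomorphism by $\eqrg$ (and labels match by the copy step), while outside---in the $D_0$ portion reachable through the poles---$\replace$ preserves every incidence between poles and their $D_0$-neighbours verbatim, matching $F$ bijectively to $F'$ through $\phi$. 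Gluing these two pieces along the shared $D_0$ vertices gives a labelled isomorphism $N^r_G(v) \cong N^r_{G'}(\phi(v))$, under which the verifier's verdict is preserved.

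The main technical hurdle is pinning down the right extendibility witness for Definition~\ref{def:eqrg} from the asymmetric hypothesis: we are given consistency only on $D_2 \cup D_3$, yet this is exactly the consistency relevant to the equivalence, because the $r$-ball in the standalone graph $H$ of any $D_1$-node already lies in $D_1 \cup D_2 = N^r(D_1)$, so no choice of labels on $D_3$ can alter a check at a $D_1$-node. Once this is seen, $\labeling_\diamond|_{V(H)}$ is a bona fide witness, and the remainder of the argument is a careful but essentially mechanical bookkeeping over the isomorphism $\phi$ and the shared $D_0$ portion of $G$ and $G'$.
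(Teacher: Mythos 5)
The paper does not prove this lemma itself; it is cited verbatim from Chang and Pettie~\cite[Section~3.6]{Chang2019} (the paper says exactly this right before the statement), so there is no in-paper argument to compare against. Your proof is the natural one and is essentially sound: copy $\labeling_\diamond$ on $D_0=D'_0$, transport it along $\phi$ on $D_1\cup D_2$, and then use the extendibility clause of $\eqrg$ to complete $D'_3$. Both conclusions then follow, from (i) the observation that the $\checkradius$-ball of any $D'_2\cup D'_3$-node is contained in $V(H')$, and (ii) the labelled isomorphism of the $\checkradius$-balls of $v$ and $\phi(v)$ for $v\in D_0\cup D_1$, obtained by gluing the shared $D_0$ portion with the $\phi$-isomorphic $D_1\cup D_2$ portion (note that for $v\in D_2$ the conditional clause is already subsumed by the second conclusion, so the gluing argument only needs to handle $D_0\cup D_1$, whose $\checkradius$-balls never reach $D_3$).

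The one point you should pin down more carefully is the extendibility witness. The paper's prose defines an extension as one that is ``locally consistent for every node,'' yet your hypothesis supplies consistency of $\labeling_\diamond$ only on $D_2\cup D_3$ — and moreover only in $G$, not in the standalone graph $H$. Your observation that a $D_1$-node's $\checkradius$-ball inside $H$ lies in $D_1\cup D_2 = N^{\checkradius}(D_1)$, so its verdict is already fixed by $\labeling_*$ and unaffected by what one writes on $D_3$, is exactly the right lever, but as written it does not discharge the possibility that $\labeling_*$ itself is inconsistent at some $D_1$-node of $H$ (where the verifier sees a truncated ball). In Chang and Pettie's actual formulation, extendibility constrains only nodes whose $\checkradius$-ball is contained in $H$, namely $D_2\cup D_3$, under which $\labeling_\diamond|_{V(H)}$ is a witness immediately. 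You should either state explicitly that this is the reading of extendibility you are using, or add the short argument that since $\phi$ is a labelled isomorphism on $D_1\cup D_2$, the consistency status of $D_1$-nodes is mirrored on $D'_1$, so the ``iff'' in Definition~\ref{def:eqrg} remains valid and useful precisely on the restricted set of assignments $\labeling_*$ that your construction actually passes in.
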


We now adapt the definition of the function $\pump$ presented in \cite{Chang2019} for our purposes. Intuitively, as previously explained, starting from our tree $T$ we replace all subgraphs $Q^T \in \spaths^T$ with a pumped version of $Q^T$. Each $Q^T$ is composed of a main path in which, for each node, there is a subtree of height $O(\sqrt{n})$. Note that $Q^T$ is connected to the rest of $T$ on the two endpoints of the main path, thus it has two poles, that implies, as previously discussed, that the number of equivalence classes under $\eqrg$ is constant. This class is also computable by a node, since it considers only the subtrees $Q^T$ that are contained in its ball. Also, we can see $Q^T$ as a sequence of $\deltree_v$, and the type of a $Q^T$ can be computed, as in \cite{Chang2019}, by reading one ``character'' (class of $\deltree_v$) at a time.
Finally, we can see the sequence as a string that, if it is long enough, we can pump in order to obtain a longer string of the same type. More formally, consider a tree $Q^T \in \spaths^T$. We can see $Q^T$ as a path of length $k$, where each node $i$ is the root of a tree $\deltree_i$ ($1\le i \le k$). Let $(\deltree_i)_{i \in [k]}$ denote this path. Let $\typetree(\deltree_j)$ be the equivalence class of the tree $\deltree_j$ considering $j$ as the unique pole, and let $\typepath(H)$ be the equivalence class of the path $H$ considering its endpoints as poles.

The following lemma says that nodes can compute the type of the deleted trees rooted on nodes contained in their balls.
\begin{lemma}\label{lem:knowntypes}
	Each node $u$ can determine the type of $\deltree_v$ for all $v \in \ball{u}$. 
\end{lemma}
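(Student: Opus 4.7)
The plan is to show that by the time the construction of Section~\ref{sub:skeleton} terminates, every node $u$ already holds in its local state a complete copy of $\deltree_v$ (rooted at $v$ and carrying all input labels) for each $v \in \ball{u}$; once this is established, $u$ computes $\typetree(\deltree_v)$ by pure local brute-force, since local computation in \local{} is unrestricted.

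I will carry this out in two steps. First, I recall from the construction of $T'$ that each node $v$ gathers $\ball{v}$, marks the short components as $\del$, and then broadcasts $\ball{v}$ together with its list of $\del$-marked edges to every node within distance $\tnew + 2\tc$. Since $v \in \ball{u}$ means $\dist{u}{v} \le \tnew \le \tnew + 2\tc$, node $u$ is in the broadcast range of $v$ and hence knows the subgraph on $\ball{v}$ together with all edge marks of~$v$. Second, I verify the containment $V(\deltree_v) \subseteq \ball{v}$: by definition, a component $C_j(v)$ contributes to $\deltree_v$ only if each of its nodes lies at distance strictly less than $\tnew$ from $v$, so every node of $\deltree_v$ belongs to $\ball{v}$. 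Combining the two steps, $u$ can reconstruct $\deltree_v$ as a rooted tree with pole $v$ and with all of its input labels, by selecting from $\ball{v}$ exactly the nodes reachable from $v$ through edges that $v$ marked as $\del$.

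Finally, $\typetree(\deltree_v)$ is by definition the equivalence class of $(\deltree_v,(v))$ under $\eqrg$, and this class depends only on the rooted labelled structure of $\deltree_v$. Thus $u$ computes the class by inspecting its local reconstruction and matching it against representatives of the (constantly many) equivalence classes, with no further communication. I do not anticipate a serious obstacle here; the lemma is essentially a bookkeeping check that the broadcast radius chosen in Section~\ref{sub:skeleton} is compatible with the definition of $\deltree_v$. The only risk is a silent off-by-one between the strict inequality $\dist{v}{w} < \tnew$ used in marking $\del$-components and the radius of $\ball{v}$, and I would verify this carefully when writing the full proof.
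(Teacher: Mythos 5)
Your proof is correct and follows essentially the same route as the paper's: you rely on the broadcast of $\ball{v}$ (with its $\del$-marks) during the skeleton-tree construction, the containment $V(\deltree_v) \subseteq \ball{v}$, and the observation that the type depends only on the rooted labelled structure of $\deltree_v$, all of which the paper uses as well. The off-by-one you flagged at the end is harmless and you can close it easily: the marking condition uses the strict inequality $\dist{v}{w} < \tnew$ while $\ball{v}$ is the closed $\tnew$-ball, and the strictness additionally guarantees that the \emph{entire} subtree of $T$ behind the $j$-th neighbour (not merely its intersection with $\ball{v}$) lies inside $\ball{v}$, so $\deltree_v$ is reconstructed in full.
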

\begin{proof}
	When nodes compute the skeleton tree $T'$, they broadcast all their balls to the nodes inside their balls. Since a tree $\deltree_v$ has height $O(\sqrt{n})$, it is fully contained in the ball of $v$, thus all the nodes in the ball of $v$ see the whole tree $\deltree_v$, and can determine its type (it depends only on the structure of $\deltree_v$ and the inputs of the nodes in this tree).
\end{proof}

The following is a crucial lemma proved in \cite[Section 3.8]{Chang2019}.
\begin{lemma}\label{lem:automata}
	Let $H = (\deltree_i)_{i \in [k]}$ and $H' = (\deltree_i)_{i \in [k+1]}$ be identical to $H$ in its first $k$ trees. Then $\typepath(H')$ is a function of $\typepath(H)$ and $\typetree(\deltree_{k+1})$. 
\end{lemma}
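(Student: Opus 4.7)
The plan is to derive the lemma from the substitution property of $\eqrg$ cited in the excerpt from \cite{Chang2019}: replacing a subgraph of a given graph by an $\eqrg$-equivalent one (with matching designated poles) preserves the $\eqrg$-type of the enclosing graph. Concretely, I will show that whenever two pairs $(H_1, \deltree_{k_1+1})$ and $(H_2, \deltree_{k_2+1})$ satisfy $\typepath(H_1) = \typepath(H_2)$ and $\typetree(\deltree_{k_1+1}) = \typetree(\deltree_{k_2+1})$, the corresponding extensions $H_1'$ and $H_2'$ (each obtained by appending the respective new tree via one additional main-path node) satisfy $\typepath(H_1') = \typepath(H_2')$; this is precisely the claimed functional dependence.

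I would carry this out in two substitution steps. First, view $H_1$ as a subgraph of $H_1'$ with its two main-path endpoints as designated poles; the ambient context consists of the new main-path node adjacent to the right pole of $H_1$, together with $\deltree_{k_1+1}$ attached to that new node. Applying the substitution property with $H_1 \eqrg H_2$ yields a graph $G$ with $\typepath(G) = \typepath(H_1')$, whose two poles are the left main-path endpoint of $H_2$ and the newly-added main-path node. Second, $G$ contains $\deltree_{k_1+1}$ as a subgraph with a single pole at its root. Substituting $\deltree_{k_2+1}$ for $\deltree_{k_1+1}$ via the same property yields a graph with the same $\typepath$, and this graph is exactly $H_2'$ by construction. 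Chaining the two substitutions gives $\typepath(H_1') = \typepath(H_2')$.

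The main obstacle lies in correctly matching pole designations at each substitution. One technical point is that only the right main-path endpoint of $H_1$ is adjacent to nodes outside $H_1$ within $H_1'$, while the left endpoint sits on the boundary of $H_1'$ itself (and is also a pole of $H_1'$). This is not problematic, because the substitution property respects arbitrary designated-pole lists: the non-boundary pole is still constrained by the $\eqrg$-isomorphism between $H_1$ and $H_2$ and ends up identified with the left pole of the enclosing graph $H_1'$ before, and $H_2'$ after, the swap. A second concern is that one needs the main paths of $H_1$ and $H_2$ to be long enough that the $r$-neighborhoods of their two poles are disjoint, so that the $\eqrg$-equivalence at the two poles can be handled independently during the substitution; this is already guaranteed by the construction of $\spaths^T$, since each main path has length at least $\tc$, which is chosen as a function of $r$.
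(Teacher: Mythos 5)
The paper does not prove this lemma itself; it is imported verbatim from Chang and Pettie \cite[Section 3.8]{Chang2019}, so there is no in-paper argument to compare against. Evaluating your proposal on its own merits: the two-substitution strategy is a sensible way to establish the claimed functional dependence, and is in the spirit of how Chang and Pettie deploy replacement arguments. However, there is a real gap at the step where you assert that ``the substitution property respects arbitrary designated-pole lists.'' The operation $\replace$ as defined in Section~\ref{sub:filling}, and the accompanying Lemma~\ref{lem:thelemma}, apply only when the pole list $F$ of the subgraph $H$ consists of exactly those vertices of $H$ adjacent to $V(G)\setminus V(H)$. In your first substitution the left endpoint of $H_1$'s main path is not adjacent to anything in $H_1'$ outside $H_1$, and in your second substitution the root of $\deltree_{k+1}$ is not on the boundary of $\deltree_{k+1}$ in the enclosing graph; so neither step is literally an instance of the cited replacement lemma, and the blanket claim is false in general (one cannot declare an arbitrary interior vertex a pole and still expect type to be preserved). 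What does hold --- and what you would need to argue --- is that the replacement preserves the type of the enclosing graph when the extra designated poles of $H$ coincide with poles of the enclosing graph itself; the standard way to reduce this to the stated lemma is to attach a dangling pendant at each pole of $H_1'$ so that it becomes a genuine boundary vertex of $H_1$ in an augmented ambient graph, apply $\replace$ there, and remove the pendants afterward. Your remark that the non-boundary pole ``ends up identified with the left pole of the enclosing graph'' gestures at the right picture but leaves the obligation undischarged.

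A smaller point: the requirement you raise at the end, that the main paths be long enough for the two pole-neighbourhoods to be disjoint, is not a hypothesis of the lemma as stated, and justifying it by appeal to the construction of $\spaths^T$ conflates a property of one particular application of the lemma with a property of the lemma itself. If short paths genuinely cause trouble for your argument, that should be flagged as an additional assumption rather than absorbed silently.
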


As shown in \cite{Chang2019}, Lemma \ref{lem:automata} allows us to bring classic automata theory into play. By Lemma \ref{lem:knowntypes}, nodes can know the type of each $\deltree_i$ contained in a path that they want to pump. Consider a path $H = (\deltree_i)_{i \in [k]}$, and the sequence $C = (c_1,\ldots,c_k)$, where $c_i$ is $\typetree(\deltree_i)$. A finite automaton can determine the type of $H$ by reading one character of $C$ at a time. The number of states in this automaton is constant, let $\lpump$ be such a constant. The following lemma holds \cite[Lemma 7]{Chang2019}).

\begin{lemma}\label{lem:pump}
	Let $H = (\deltree_i)_{i \in [k]}$, with $k \ge \lpump$. $H$ can be decomposed into three subpaths $H = x \circ y \circ z$ such that:
	\begin{itemize}
		\item $|xy| \le \lpump$,
		\item $|y| \ge 1$,
		\item $\typepath(x \circ y^j \circ z) = \typepath(H)$ for each non-negative $j$. 
	\end{itemize}
\end{lemma}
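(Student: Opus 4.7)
The plan is to apply the classical pumping lemma argument from formal language theory to the finite automaton guaranteed by Lemma \ref{lem:automata}. Concretely, let $\A^\star$ be the deterministic finite automaton with at most $\lpump$ states whose alphabet is the (constant-size) set of tree classes $\typetree(\cdot)$, and whose state after consuming $c_1 c_2 \cdots c_i$ determines $\typepath\bigl((\deltree_j)_{j \in [i]}\bigr)$. Feed $\A^\star$ the character sequence $C = (c_1, \dots, c_k)$ with $c_i = \typetree(\deltree_i)$; call $s_0, s_1, \dots, s_k$ the resulting state trajectory, where $s_0$ is the initial state.

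Since $k \ge \lpump$, the prefix trajectory $s_0, s_1, \dots, s_{\lpump}$ has $\lpump + 1$ entries lying in a state space of size at most $\lpump$. By the pigeonhole principle there exist indices $0 \le a < b \le \lpump$ with $s_a = s_b$. I would then define
\[
x = (\deltree_i)_{i \in [1, a]}, \quad y = (\deltree_i)_{i \in [a+1, b]}, \quad z = (\deltree_i)_{i \in [b+1, k]},
\]
so that $H = x \circ y \circ z$, $|xy| = b \le \lpump$, and $|y| = b - a \ge 1$, establishing the first two bullets immediately.

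For the third bullet, observe that reading $y$ from state $s_a$ returns $\A^\star$ to state $s_b = s_a$; hence reading $y^j$ from $s_a$ also ends in $s_a$ for every non-negative integer $j$ (including $j = 0$, trivially). Therefore, feeding $\A^\star$ the sequence $x \circ y^j \circ z$ produces exactly the same final state as feeding it $x \circ y \circ z = H$, so by Lemma \ref{lem:automata} we obtain $\typepath(x \circ y^j \circ z) = \typepath(H)$.

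The argument is entirely standard; the only non-routine point is verifying that Lemma \ref{lem:automata} indeed gives us a genuine finite-state computation of $\typepath$ from the character stream $\typetree(\deltree_i)$, which is what justifies invoking the pumping argument in the first place. This was already taken care of by the previous lemma, so no further obstacle remains.
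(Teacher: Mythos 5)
Your proof is correct. The paper does not actually prove this lemma itself---it simply cites \cite[Lemma~7]{Chang2019}---so there is no internal proof to compare against; your argument (take the DFA whose states are the $\typepath$ classes with transitions given by Lemma~\ref{lem:automata}, apply pigeonhole to the state trajectory $s_0,\dots,s_{\lpump}$, and pump the loop $y$ between the repeated state) is exactly the standard pumping argument that the cited result rests on, so the approach matches.
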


We finally define the function $\pump$, that, given a tree $Q^T$ having a main path of short length, produces a new tree $P^T$ having a main path that is arbitrary longer, such that their types are equivalent.

\begin{definition}[$\pump$]
	Let $Q^T \in \spaths^T$, and fix $\tc = \lpump + 4 \checkradius$. We have that the main path of $Q^T$ has length at least $\lpump  + 4 \checkradius$. Let us split the main path of $Q^T$ in three subpaths $p_l, p_c, p_r$, two of length $2\checkradius$ near the poles ($p_l$ and $p_r$), and one of length at least $\lpump$ containing the remaining nodes ($p_c$). $\pump(Q^T,\pumpmult)$ produces a tree $P^T$ such that $\typepath(Q^T) = \typepath(P^T)$ and the main path of $P^T$ has length $l'$ satisfying $c B \le l' \le c (B+1)$. This is obtained by pumping the subpath $p_c$.  By Lemma \ref{lem:pump} such a function exists. Since the paths $p_l$ and $p_r$ are preserved during the pump, the isomorphism near the poles is preserved.
\end{definition}

We now prove that the partial labelling produced by the algorithm previously described can be completed consistently. Consider the tripartition described in Definition \ref{def:eqrg}. Let $\mathcal{R}$ be the union of all the replaced subgraphs, and let $D_1$, $D_2$, $D_3$ be a tripartition of it as defined in Definition \ref{def:eqrg}. By definition of $T_o$, $\Lambda$ corresponds to nodes in $D_0 \cup D_1 \cup D_2$.

First, notice that a node in $\mathcal{R}$ sees all the nodes in the regions $D_1$ and $D_2$ of the replaced subgraph where it is located, thus it has enough information needed to find a valid output via brute force.

Second, by Lemma \ref{lem:thelemma}, in order to show that the partial labelling can be completed consistently, it is enough to show that each replaced $Q^T$ is in the same equivalence class as $\pump(Q^T,\pumpmult)$, which is true by the definition of $\pump$.

\bibliographystyle{plainurl}
\bibliography{bibliography}

\end{document}